\documentclass[10pt]{article}
\usepackage[left=1in,top=1in,right=1in,bottom=1in]{geometry}
\usepackage{times}

\usepackage{verbatim}
\usepackage{amsmath}
\usepackage{amsthm}
\usepackage{amssymb}
\usepackage{graphicx}

\usepackage{url}

\newtheorem{theorem}{Theorem}
\newtheorem{definition}{Definition}
\newtheorem{lemma}{Lemma}

\DeclareMathOperator*{\argmin}{arg\,min}

\begin{document}

\title{Human strategic decision making in parametrized games}

% Single author syntax
\author{
    Sam Ganzfried \\
    Ganzfried Research \\
    sam.ganzfried@gmail.com \\
}

\date{}

\maketitle

\begin{abstract}
  Many real-world games contain parameters which can affect payoffs, action spaces, and information states. For fixed values of the parameters, the game can be solved using standard algorithms. However, in many settings agents must act without knowing the values of the parameters that will be encountered in advance. Often the decisions must be made by a human under time and resource constraints, and it is unrealistic to assume that a human can solve the game in real time. We present a new framework that enables human decision makers to make fast decisions without the aid of real-time solvers. We demonstrate applicability to a variety of situations including settings with multiple players and imperfect information.
\end{abstract}

\section{Introduction}
Strong algorithms have been developed for game classes with many elements of complexity. For example, algorithms were recently able to defeat human professional players in 2-player~\cite{Moravcik17b:DeepStack,Brown17:Superhuman} and 6-player no-limit Texas hold 'em~\cite{Brown19:Superhuman}. These games have imperfect information, sequential actions, very large state spaces, and the latter has more than two players (solving multiplayer games is more challenging than two-player zero-sum games from a complexity-theoretic perspective). However, these algorithms all require an extremely large amount of computational resources for offline and/or online computations and for optimizing neural network hyperparameters. The algorithms also have a further limitation in that they are using all these resources just to solve for one very specific version of the game (e.g., Libratus and DeepStack assumed that all players start the hand with 200 times the big blind, and Pluribus assumed that all players start the hand with 100 times the big blind). In real poker, the stack sizes of the players will fluctuate as players win or lose hands, and will often differ from the standard starting values. While one could apply the same algorithms for any specific starting stack and blind values, there are too many possibilities to be able to run the algorithms on \emph{all} of them. 

In many real-world situations, there are some game parameters that will be encountered during gameplay that are unknown in advance (such as the stack sizes in poker). 
Furthermore, often the real-time decision maker will be a human, who may not have the ability to perform complex computations (even though such computations may have been performed offline in advance). For example, football teams have used statistical and game-theoretic models to decide whether or not to punt on a fourth down. In advance of game play, algorithms can be run that solve for optimal solutions in these models, perhaps by utilizing databases of historical play. However, in real time, the coach has only a matter of seconds (or minutes if a timeout is used) to make the decision. The optimal decision may depend on several factors that are not known until real time; for example, the score, how much time is remaining, the overall strength of the two teams, etc. The coach will need to weigh any offline algorithmic solution with newly-observed values of these parameters to make quick real-time decisions. Similar situations are also encountered frequently in national security. In recent years game-theoretic algorithms have been increasingly applied to national security domains. These algorithms must be applied offline with specific values of parameters used, which may differ from the actual values encountered in real time (e.g., an attacker may use more resources than anticipated). 

In these settings it is not sufficient to develop a successful algorithm for solving the game for one specific value of parameters; it is also necessary to develop a procedure for a decision maker to compute an optimal solution in real time for the newly-observed value of the parameters. Often the decision maker is a human, who may have limited technical expertise. It is not realistic to assume that the human decision maker can perform a complex algorithmic computation, tune or optimize a neural network, or perform a search over a large historical database during the seconds or minutes available to make the decision. While it is realistic to assume that a human can perform a table lookup to implement a previously computed strategy (even one stored in a large binary file), as described above this strategy may not apply to the observed parameter values. In addition, it is reasonable to assume that the human has access to a ``cheat sheet'' which contains a relatively small set of general rules to be applied depending on the parameter values encountered. It is also reasonable to assume that the decision maker can perform basic arithmetic (perhaps with the aid of a calculator). 

\section{Parametrized games}
We consider a setting where a decision maker must determine a strategy for a game $G_{\mathbf{\lambda}}$, which contains a vector of parameters $\mathbf{\lambda}$ drawn from some domain $\Lambda$ (typically subsets of the set of real numbers or integers). For any specific values of the parameters $\mathbf{\lambda}$, $G_{\mathbf{\lambda}}$ can be solved using standard algorithms. However, it is infeasible to solve the game in advance for all possible parameter values (there may be infinitely many). 

The parameters can affect different components of the game. They can affect the payoffs, set of players, strategy spaces, and/or private information. Often our goal will be to solve the game by computing a standard game-theoretic solution concept such as a Nash equilibrium. A second goal that we will also consider is opponent exploitation---computing a best response to perceived strategies of the opponents. The model of the opponents' strategies is determined in real time based on observations of the opponents' play, perhaps utilizing a prior distribution based on historical data. In this situation, the parameters can also affect the strategies of the opponents. In general our framework can allow for optimizing any objective, though we will focus on the natural objectives of computing Nash equilibrium or optimizing performance against assessments of opponents' strategies.

\begin{comment}
2x2 games
Small games: support enumeration
Normal-form games
Imperfect information
Multiple players
Rules vs. memorizing solutions
Sampling argument for continuous functions
Parametrized payoffs, action spaces, private information
Comparison to abstraction
Generalized Kuhn poker
Final Jeopardy
National security
Two citations for parametrized games
sports football -- go for it on 4th and 2?
weakest link -- can incorporate opponent model
could be any solution concept, eg best response (weakest link) with opponents' strategies as parameter
equivalence between decision lists and interval mapping
"cheat sheet"
future work: how to "learn" the rules.
Parameter can be belief about any aspect of the game.
Prior research on parametrized games and human understandable strategies, and perturbed games eg for trembling hand perfection.
Mixed strategies/Nash equilibrium
\end{comment}

\begin{definition}
A \emph{strategic-form game} $G$ is a tuple $(N,S,u)$ with finite set of players $N = \{1,\ldots,n\}$, finite set of pure strategies $S_i$ for each player $i \in N$, and real-valued utility for each player for each strategy vector (aka \emph{strategy profile}), $u_i : \times_j S_j \rightarrow \mathbb{R}$.
\end{definition}

\begin{definition}
A \emph{payoff-parametrized strategic-form game} is a tuple $(\{G_{\mathbf{\lambda}}\},\Lambda)$ where for each real-valued vector of parameters $\mathbf{\lambda} \in \Lambda$, $G_{\mathbf{\lambda}}$ is a tuple $(N,S,u_{\mathbf{\lambda}})$ with finite set of players $N = \{1,\ldots,n\}$, finite set of pure strategies $S_i$ for each player $i \in N$, and real-valued utility for each player for each strategy vector (aka \emph{strategy profile}), $u_{\mathbf{\lambda},i} : \times_j S_j \rightarrow \mathbb{R}$.
\end{definition}

\begin{definition}
A \emph{strategy-parametrized strategic-form game} is a tuple $(\{G_{\mathbf{\lambda}}\},\Lambda)$ where for each real-valued vector of parameters $\mathbf{\lambda} \in \Lambda$, $G_{\mathbf{\lambda}}$ is a tuple $(N,S_{\mathbf{\lambda}},u)$ with finite set of players $N$, finite set of pure strategies $S_{\mathbf{\lambda},i}$ for each player $i \in N$, and real-valued utility for each player for each strategy vector (aka \emph{strategy profile}), $u_i : \times_j S_{\mathbf{\lambda},j} \rightarrow \mathbb{R}$.
\end{definition}

We can analogously define a \emph{player-parametrized game} where the set of players is determined by $\mathbf{\lambda}.$ We can also consider games that simultaneously consider several different types of parametrization. In general we will denote a strategic-form game that has payoff, strategy, and/or player parametrization as a \emph{parametrized strategic-form game}, $(\{G_{\mathbf{\lambda}}\},\Lambda)$, with $G_{\mathbf{\lambda}} = (\mathbf{\lambda},N_{\mathbf{\lambda}},S_{\mathbf{\lambda}},u_{\mathbf{\lambda}})$ for  $\mathbf{\lambda} \in \Lambda$.    

While the strategic form can be used to model simultaneous actions, another representation, called the \emph{extensive form}, is generally preferred when modeling settings that have sequential moves. The extensive form can also model simultaneous actions, as well as chance events and imperfect information (i.e., situations where some information is available to only some of the agents and not to others). Extensive-form games consist primarily of a game tree; each non-terminal node has an associated player (possibly \emph{chance}) that makes the decision at that node, and each terminal node has associated utilities for the players.  Additionally, game states are partitioned into \emph{information sets}, where the player whose turn it is to move cannot distinguish among the states in the same information set.  Therefore, in any given information set, a player must choose actions with the same distribution at each state contained in the information set.  %If no player forgets information that he previously knew, we say that the game has \emph{perfect recall}. 
%A (behavioral) \emph{strategy} for player $i,$ $\sigma_i \in \Sigma_i,$ is a function that assigns a probability distribution over all actions at each information set belonging to $i$.  
A pure strategy for player $i$ is a mapping that selects an action at each information set belonging to player $i$.

In typical imperfect-information extensive-form games, the initial move is a \emph{chance move} that assigns private information to each player (from a publicly-known distribution). For example, this could be private cards in poker, item valuations in auctions, or resource values in security games. Then players perform a sequence of publicly-observable actions until a terminal node is reached. Thus, for each sequence of public actions $p$, each player $i$ selects a strategy that is dependent on their own private information, $\tau_i$. Instead of viewing this decision as a selection of separate actions at each information set that follows the action sequence $p$, one can view this as a mapping that assigns an action for each possible value of the private information $\tau_i.$ If the set of possible values of the private information $\tau_i$ is dictated by a vector of parameters $\mathbf{\lambda}$, then we say that the game is \emph{information-parametrized}.

%There are two main types of approaches we will consider. In the first approach, a small set of representative games with different parameter values is solved in advance, and the solution is played for the game from the set that is most similar to the game with the actual parameter values. In the second approach, a small set of rules is constructed, using a new representation that we call a parametric decision list. The rules dictate which strategy should be played depending on the parameter values; the decision maker goes down the list until the rule is encountered for the parameter values at hand.

%TODO: think about OR, not just AND
%Can represent OR using DeMorgan
\section{Parametric decision lists}
Our goal for parametrized games is to develop a ``cheat sheet'' that allows a human decision maker to quickly select a strategy in real time for any possible value of the parameters $\mathbf{\lambda}.$ We propose a new structure, which we call a \emph{parametric decision list} (PDL), which contains a small set of rules that dictate which strategy should be played for every possible parameter value in a way that can be easily understood and implemented by a human. Similarly to a standard decision list, a parametric decision list consists of a series of conditions, each resulting in the output of a strategy. For game $G_{\mathbf{\lambda}}$, each condition will be of the form ``if $\mathbf{f_i}(\mathbf{\lambda}) \mathbf{o_i} \mathbf{0}$,'' where $\mathbf{f_i}$ is a vector-valued function of the parameters $\mathbf{\lambda}$, and $\mathbf{o_i}$ is vector of comparison operators from the set $\{<,\leq,>,\geq,=,\neq\}.$ For example if $\mathbf{\lambda} = (\lambda_1,\lambda_2)$, $\mathbf{f_i} = (4\lambda_1 + 2\lambda_2, 3\lambda_1 - 5\lambda_2)$, $\mathbf{o_i} = (\geq, <)$, then the condition would correspond to ``if $4\lambda_1 + 2\lambda_2 \geq 0$ and $3\lambda_1 - 5\lambda_2 < 0.$''  If the condition is satisfied, then (mixed) strategy $s_i$ is output. We can view the initial condition as corresponding to an ``If'' statement, subsequent conditions as ``Else if,'' and the final condition as ``Else.''   

\begin{definition}
\label{def:pdl}
A parametric decision list $L$ for $G_{\mathbf{\lambda}}$ is a tuple $L = (F, O, S)$, where $F = (\mathbf{f_i})$ is a sequence of functions $f_i : \mathbb{R}^{|\mathbf{\lambda}|} \rightarrow \mathbb{R}^w$, $O = (\mathbf{o_i})$ is a sequence of vectors of primitive comparison operators $\mathbf{o_i}$ with $|O| = |F|$, with $w = |\mathbf{o_i}|,$ and $S = (s_i)$ is a sequence of $|F| + 1$ (mixed) strategies. 
\end{definition}

We define the \emph{depth} of parametric decision list $L$ to equal the number of strategies, $|S| = |F| + 1.$ The first $|F|$ strategies correspond to when each of the $|F|$ conditions are met, and the final strategy corresponds to the default case when none are met (aka, the ``else'' condition). The \emph{width} of $L$ is equal to $w$, the length of the vectors $\mathbf{o_i}$. Each function outputs a $w$-dimensional vector $f_i$. Then each component $j$ is compared to 0 using operator $o_{ij}$. If all conditions of the operators are met, then the list dictates following strategy $s_i$. 

We say that parametrized game $G_{\mathbf{\lambda}}$ with objective function $g_{\mathbf{\lambda}}$ is $(d,w,\epsilon)$-implementable if there exists a parametric decision list $L$ with depth at most $d$ and width at most $w$ that achieves an objective value $g_{\mathbf{\lambda}}(s_L) \geq g^*_{\mathbf{\lambda}} - \epsilon$ for all $\mathbf{\lambda} \in \mathbf{\Lambda}$ for the strategy $s_L$ determined by $L$, where $g^*_{\mathbf{\lambda}}$ is the optimal value of the objective $g_{\mathbf{\lambda}}$ for $G_{\mathbf{\lambda}}$. The two primary objective functions we will be considering are the \emph{exploitability} of a strategy in a two-player zero-sum game, which is defined as the difference between the game value and payoff of the strategy against a best response to it, and performance of a strategy against a specific strategy (or distribution of strategies) for the opponents.% in two or multi-player games.

%For the case of two-player zero-sum games, one natural objective could be the \emph{exploitability} of a strategy, which is defined as the difference between the game value and payoff of the strategy against a best response to it. Another objective could be performance of the strategy against a specific strategy (or distribution of strategies) for the opponents. In fact, the opponents' strategies can be encoded within the parameters $\mathbf{\lambda}$, in which case the objective would be performance against the strategy or distribution of strategies determined by $\mathbf{\lambda}$. Thus, the objective is a very general concept, and can include standard game-theoretic solution concepts such as Nash equilibrium and its refinements, as well as exploitative concepts that incorporate opponent modeling and potentially incorporate real-time observations of play. 

%Discuss parametric decision lists for imperfect information.

%Discuss nearest-neighbor, distance metric, clustering, extension to multiple dimensional parameter space
\section{Parameter sampling}
%In general it may be challenging to compute a small parametric decision list that achieves an approximately optimal value of the objective function. 
A second approach for generating a set of rules for a human decision-maker would be to repeatedly sample values for parameters $\mathbf{\lambda}_i$ and compute the optimal strategy $s_i$ in the parametrized game $G_{\mathbf{\lambda}_i}$ using standard approaches. Then when game $G_{\mathbf{\lambda}^*}$ is encountered in real time, we output the solution $s_{i}$ corresponding to the value of $i$ that minimizes $d(\mathbf{\lambda}_i,\mathbf{\lambda}^*)$, where $d$ is an appropriate distance metric. This sampling can be done uniformly at random over a suitably-chosen domain, or according to a more informative prior distribution if one is available. Assuming that the number of sampled games is relatively small and it is not too difficult to compute the distance function, this can potentially be another approach for human decision-making in parametrized games. 

Theorem~\ref{th:sampling} shows that as the number of samples grows large, this approach produces an optimal strategy if the payoffs are continuous functions of the parameters. The analysis is for the minimum exploitability metric in two-player zero-sum games, though similar analysis can also apply for other objectives. For simplicity we assume that the parameter $\lambda$ is one-dimensional and use the absolute value for the distance function, though an analogous result can be shown for the multi-dimensional case using an arbitrary distance metric.

\begin{lemma}
\label{le:cont-str}
Suppose all payoffs of $G''$ are within $\epsilon$ of the payoffs of $G'$.
Let $s'$ be a strategy profile in $G'.$
Then $|u^{G'}_i(s'_i,s'_{-i}) - u^{G''}_i(s'_i,s'_{-i})| \leq \epsilon.$
\end{lemma}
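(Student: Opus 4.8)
The plan is to reduce the claim about mixed strategies to the hypothesis about individual payoff entries by expanding the expected utility and applying the triangle inequality. First I would observe that, since $G'$ and $G''$ share the same player set $N$ and the same pure-strategy spaces $S_i$ (this is implicit in the phrase ``all payoffs of $G''$ are within $\epsilon$ of the payoffs of $G'$,'' exactly as in the payoff-parametrized setting of Definition~2), the mixed profile $s'$ is a legitimate strategy profile in both games, and in either game $G$ its expected utility is
\[
u^{G}_i(s'_i, s'_{-i}) = \sum_{a \in \times_j S_j} \Big(\prod_{j \in N} s'_j(a_j)\Big)\, u^{G}_i(a),
\]
where the weights $w(a) = \prod_{j} s'_j(a_j)$ are nonnegative and sum to $1$.

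Next I would subtract the two expressions, using that the weights $w(a)$ are identical in both games because they depend only on $s'$ and not on the payoff values, to get
\[
u^{G'}_i(s'_i,s'_{-i}) - u^{G''}_i(s'_i,s'_{-i}) = \sum_{a} w(a)\,\big(u^{G'}_i(a) - u^{G''}_i(a)\big).
\]
Then the triangle inequality gives $|u^{G'}_i(s'_i,s'_{-i}) - u^{G''}_i(s'_i,s'_{-i})| \leq \sum_a w(a)\,|u^{G'}_i(a) - u^{G''}_i(a)|$. Applying the hypothesis $|u^{G'}_i(a) - u^{G''}_i(a)| \leq \epsilon$ termwise and using $\sum_a w(a) = 1$ yields the bound $\epsilon$, which is the claim.

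The argument is essentially a one-line averaging (convexity) estimate, so I do not anticipate a genuine obstacle; the only point that needs care is the bookkeeping that justifies pulling the common mixing weights outside the difference and confirming that they form a probability distribution, both of which are immediate from the definition of a mixed strategy and its expected payoff. The one structural assumption being used is that $s'$ lives in a strategy space common to $G'$ and $G''$; this holds by construction in the payoff-parametrized case that the subsequent sampling theorem relies on, and an analogous statement would require the same common-domain hypothesis in the strategy- or information-parametrized settings.
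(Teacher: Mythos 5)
Your proposal is correct and follows essentially the same route as the paper: expand the expected utility as a convex combination of pure-profile payoffs and apply the triangle inequality termwise, using that the mixing weights are common to both games and sum to one. Your version is just a more careful writing-out of the paper's one-line averaging argument.
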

\begin{proof}
The utility against $s'_{-i}$ equals the sum of the utilities against each of the opponent's 
pure strategies $s_{-i}$ multiplied by the weight that $s'_{-i}$ places on $s_{-i}$. Since each of the
utilities in $G''$ is within $\epsilon$ of the utilities in $G'$, the weighted sums must be within $\epsilon$
of each other. 
\end{proof}

\begin{comment}
\begin{lemma}
\label{le:cont-br}
Suppose all payoffs of $G''$ are within $\epsilon$ of the payoffs of $G'$.
Let $s'_i$ be a Nash equilibrium strategy for player $i$ of $G'$, and $s^*_{-i}$ be a nemesis strategy against $s'_i$. 
Then $|u^{G'}_i(s'_i,s^*_{-i}) - u^{G''}_i(s'_i,s^*_{-i})| \leq \epsilon.$
\end{lemma}
\begin{proof}
This follows from Lemma~\ref{le:cont-str} using the strategy profile where player $i$ plays $s'_i$ and the opponent plays $s^*_{-i}.$
\end{proof}
\end{comment}

\begin{lemma}
\label{le:nemesis}
Suppose all payoffs of $G''$ are within $\epsilon$ of the payoffs of $G'$.
Let $s'_i$ be a strategy for player $i$ of $G'$, let $s^*_{-i}$ be a nemesis strategy against $s'_i$ in $G'$,
and let $s^{**}_{-i}$ be a nemesis strategy against $s'_i$ in $G''$.
Then $|u^{G'}_i(s'_i,s^*_{-i}) - u^{G''}_i(s'_i,s^{**}_{-i})| \leq \epsilon.$
\end{lemma}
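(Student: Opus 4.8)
The plan is to bound the target quantity from both sides, in each case playing the optimality (nemesis) property of one strategy in its own game against an evaluation of the \emph{other} game's nemesis, with Lemma~\ref{le:cont-str} used to transfer values between $G'$ and $G''$. Note first that since we are in the payoff-parametrized setting, $G'$ and $G''$ share the same strategy sets, so $s^*_{-i}$ and $s^{**}_{-i}$ are both feasible opponent strategies in either game and can be evaluated in either one.

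For the first inequality, I would use that $s^*_{-i}$ is a nemesis against $s'_i$ in $G'$, i.e.\ $s^*_{-i} \in \argmin_{s_{-i}} u^{G'}_i(s'_i,s_{-i})$, so in particular $u^{G'}_i(s'_i,s^*_{-i}) \leq u^{G'}_i(s'_i,s^{**}_{-i})$. Applying Lemma~\ref{le:cont-str} to the strategy profile $(s'_i,s^{**}_{-i})$ gives $u^{G'}_i(s'_i,s^{**}_{-i}) \leq u^{G''}_i(s'_i,s^{**}_{-i}) + \epsilon$. Chaining the two bounds yields $u^{G'}_i(s'_i,s^*_{-i}) - u^{G''}_i(s'_i,s^{**}_{-i}) \leq \epsilon$.

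For the reverse inequality I would run the symmetric argument with the roles of the two games swapped: $s^{**}_{-i}$ is a nemesis against $s'_i$ in $G''$, so $u^{G''}_i(s'_i,s^{**}_{-i}) \leq u^{G''}_i(s'_i,s^*_{-i})$, and then Lemma~\ref{le:cont-str} applied to $(s'_i,s^*_{-i})$ gives $u^{G''}_i(s'_i,s^*_{-i}) \leq u^{G'}_i(s'_i,s^*_{-i}) + \epsilon$. Combining this with the first inequality produces $|u^{G'}_i(s'_i,s^*_{-i}) - u^{G''}_i(s'_i,s^{**}_{-i})| \leq \epsilon$, as desired.

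There is no real obstacle here; the one point requiring care is conceptual rather than technical, namely resisting the temptation to compare $s^*_{-i}$ and $s^{**}_{-i}$ directly as strategies (they need not be close even when $\epsilon$ is small) and instead comparing only the induced values via the cross-evaluation above. I would also note that the identical argument, with an additional outer optimization over $s'_i$, shows that the minimum exploitability of player $i$ differs by at most $\epsilon$ between $G'$ and $G''$, which is the form in which this lemma is presumably consumed by Theorem~\ref{th:sampling}.
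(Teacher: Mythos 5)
Your proof is correct and takes essentially the same approach as the paper: the paper establishes the same two one-sided bounds by contradiction (assuming one side exceeds $\epsilon$ and deriving a violation of the nemesis property via Lemma~\ref{le:cont-str}), which is just the contrapositive of your direct chaining of the cross-evaluation inequalities.
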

\begin{proof}
Suppose that $u^{G'}_i(s'_i,s^*_{-i}) > u^{G''}_i(s'_i,s^{**}_{-i}) + \epsilon.$
By Lemma~\ref{le:cont-str}, 
$$u^{G'}_i(s'_i,s^{**}_{-i}) \leq u^{G''}_i(s'_i,s^{**}_{-i}) + \epsilon < u^{G'}_i(s'_i,s^*_{-i}),$$
which contradicts the fact that $s^*_i$ is a nemesis strategy against $s'_i$ in $G'.$
We obtain a similar contradiction if $$u^{G''}_i(s'_i,s^{**}_{-i}) > u^{G'}_i(s'_i,s^*_{-i}) + \epsilon.$$
So we conclude that $|u^{G'}_i(s'_i,s^*_{-i}) - u^{G''}_i(s'_i,s^{**}_{-i})| \leq \epsilon.$
\end{proof}

\begin{lemma}
\label{le:cont-values}
Suppose all payoffs of $G''$ are within $\epsilon$ of the payoffs of $G'$.
Let $v'_i$ be the game value of $G'$ to player $i$ and $v''_i$ be the value of $G''$.
Then $|v'_i - v''_i| \leq \epsilon.$
\end{lemma}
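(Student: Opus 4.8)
The plan is to bound $|v_i' - v_i''|$ by relating both game values to the same strategy profile through the minimax characterization. Recall that in a two-player zero-sum game the value to player $i$ equals $\max_{s_i} \min_{s_{-i}} u_i(s_i, s_{-i})$, i.e.\ the payoff that player $i$ can guarantee by playing a maximin strategy against a nemesis response. So I would first fix $s_i'$ to be a maximin (equilibrium) strategy for player $i$ in $G'$, so that $v_i' = u_i^{G'}(s_i', s_{-i}^*)$ where $s_{-i}^*$ is a nemesis against $s_i'$ in $G'$.

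The key step is then to invoke Lemma~\ref{le:nemesis} with this $s_i'$: letting $s_{-i}^{**}$ be a nemesis against $s_i'$ in $G''$, we get $|u_i^{G'}(s_i', s_{-i}^*) - u_i^{G''}(s_i', s_{-i}^{**})| \leq \epsilon$. Now $u_i^{G'}(s_i', s_{-i}^*) = v_i'$ by construction, while $u_i^{G''}(s_i', s_{-i}^{**}) = \min_{s_{-i}} u_i^{G''}(s_i', s_{-i}) \leq v_i''$ because the max over player $i$'s strategies in $G''$ can only do at least as well as the particular choice $s_i'$. This yields $v_i' \leq v_i'' + \epsilon$, i.e.\ one direction of the desired inequality.

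For the reverse direction I would repeat the argument symmetrically, starting instead from a maximin strategy $s_i''$ for player $i$ in $G''$ so that $v_i'' = u_i^{G''}(s_i'', s_{-i}^{\dagger})$ for a nemesis $s_{-i}^{\dagger}$ in $G''$, and applying Lemma~\ref{le:nemesis} (with the roles of $G'$ and $G''$ swapped, which is legitimate since the hypothesis ``all payoffs within $\epsilon$'' is symmetric) to obtain $v_i'' \leq v_i' + \epsilon$. Combining the two bounds gives $|v_i' - v_i''| \leq \epsilon$.

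The only mild subtlety — not really an obstacle — is making sure the maximin value used equals the minimax value, i.e.\ that the minimax theorem applies so that "game value" is well defined; this is standard for finite two-player zero-sum games and the paper's definition of exploitability already presupposes it. A secondary point to state carefully is the direction of the inequality $u_i^{G''}(s_i', s_{-i}^{**}) \leq v_i''$: it holds because $v_i''$ is the value player $i$ can secure in $G''$ and $s_i'$ is merely one available strategy, so its guaranteed payoff cannot exceed the optimum. Everything else is a direct application of the preceding lemmas.
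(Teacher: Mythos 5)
Your proof is correct and follows essentially the same route as the paper's: fix an equilibrium (maximin) strategy for player $i$ in $G'$, apply Lemma~\ref{le:nemesis} to compare its guaranteed payoffs in the two games, use $u_i^{G''}(s_i', s_{-i}^{**}) \leq v_i''$ to get $v_i' \leq v_i'' + \epsilon$, and conclude by symmetry. No substantive differences.
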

\begin{proof}
Let $s'_i$ be a Nash equilibrium strategy profile in $G'.$
Then $u_i(s'_i,s'_{-i}) = v'_i.$ 
Let $s^{**}_{-i}$ be a nemesis against $s'_i$ in $G''.$
By Lemma~\ref{le:nemesis}, 
$|u^{G'}_i(s'_i,s'_{-i}) - u^{G''}_i(s'_i,s^{**}_{-i})| \leq \epsilon.$
So $|v'_i - u^{G''}_i(s'_i,s^{**}_{-i})| \leq \epsilon.$
So $v'_i \leq u^{G''}_i(s'_i,s^{**}_{-i}) + \epsilon.$
We know that $u^{G''}_i(s'_i,s^{**}_{-i}) \leq v''_i.$
So $v'_i \leq v''_i + \epsilon.$
Similar reasoning shows that $v''_i \leq v'_i + \epsilon.$
So $|v'_i - v''_i| \leq \epsilon.$
\end{proof}

\begin{lemma}
\label{le:expl}
Suppose $s'$ is a Nash equilibrium of $G'$, and all payoffs of $G''$ are within $\epsilon$ of the payoffs of $G'$. Then the exploitability of $s'$ in $G''$ is at most $2\epsilon.$
\end{lemma}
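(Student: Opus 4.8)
The plan is to bound the exploitability of $s'_i$ in $G''$ for an arbitrary player $i$; since $G''$ is two-player zero-sum the argument is symmetric in the two players, so treating one suffices. Write $v'_i$ for the value of $G'$ to player $i$ and $v''_i$ for the value of $G''$, let $s^*_{-i}$ be a nemesis against $s'_i$ in $G'$, and let $s^{**}_{-i}$ be a nemesis against $s'_i$ in $G''$, so that by definition the exploitability of $s'_i$ in $G''$ equals $v''_i - u^{G''}_i(s'_i,s^{**}_{-i})$.

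First I would use the fact that a Nash equilibrium strategy of a two-player zero-sum game is a maximin strategy: because $s'$ is an equilibrium of $G'$, the strategy $s'_i$ cannot be held below the game value, and a nemesis attains that bound, so $u^{G'}_i(s'_i,s^*_{-i}) = v'_i$. Applying Lemma~\ref{le:nemesis} to $G'$ and $G''$ then gives $u^{G''}_i(s'_i,s^{**}_{-i}) \geq u^{G'}_i(s'_i,s^*_{-i}) - \epsilon = v'_i - \epsilon$; that is, $s'_i$ cannot be exploited below $v'_i - \epsilon$ in $G''$. Next I would invoke Lemma~\ref{le:cont-values} to obtain $v''_i \leq v'_i + \epsilon$. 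Combining the two inequalities, the exploitability of $s'_i$ in $G''$ is $v''_i - u^{G''}_i(s'_i,s^{**}_{-i}) \leq (v'_i + \epsilon) - (v'_i - \epsilon) = 2\epsilon$, which is the claim; since nothing distinguished player $i$ from the other player, the same bound holds for $s'_{-i}$ as well.

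There is no serious obstacle here, since Lemmas~\ref{le:nemesis} and~\ref{le:cont-values} already do the heavy lifting. The only point requiring care is the first step: one must use that $s'$ is a genuine Nash equilibrium (equivalently, a minimax-optimal strategy) of $G'$, not merely some strategy profile, so that its payoff against a $G'$-nemesis is exactly $v'_i$; without this the perturbation bound would not close to $2\epsilon$. I would also state explicitly which convention for ``exploitability of a profile'' is in force (per-player value minus best-response payoff, then taken over both players), though the $2\epsilon$ bound is unaffected by the usual choices.
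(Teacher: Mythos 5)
Your proof is correct and follows essentially the same route as the paper: both write the exploitability as $v''_i - u^{G''}_i(s'_i,s^{**}_{-i})$, bound $|v''_i - v'_i|$ by $\epsilon$ via Lemma~\ref{le:cont-values}, and bound the gap between $u^{G''}_i(s'_i,s^{**}_{-i})$ and $v'_i = u^{G'}_i(s'_i,s'_{-i})$ by $\epsilon$ using the nemesis-perturbation argument. If anything, your explicit appeal to Lemma~\ref{le:nemesis} for that second piece is the cleaner citation, since the paper's invocation of Lemma~\ref{le:cont-str} there compares payoffs under two \emph{different} opponent strategies, which is exactly what Lemma~\ref{le:nemesis} (not Lemma~\ref{le:cont-str}) handles.
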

\begin{proof}
Let $s'$ be a Nash equilibrium of $G'$, and $s^{**}_{-i}$ be a nemesis strategy to $s'$ in $G''.$ 
Let $v'_i$ denote the value of $G'$ and $v''_i$ denote the value of $G''.$
The exploitability of $s'_i$ in $G''$ equals $v''_i - u^{G''}_i(s'_i,s^{**}_{-i}).$
By the triangle inequality and Lemmas~\ref{le:cont-str} and~\ref{le:cont-values}, 
$$|v''_i - u^{G''}_i(s'_i,s^{**}_{-i})|$$ 
$$\leq |v''_i - v'_i| + |v'_i - u^{G'}_i(s'_i,s'_{-i})| + |u^{G'}_i(s'_i,s'_{-i}) - u^{G''}_i(s'_i,s^{**}_{-i})|$$
$$\leq \epsilon + 0 + \epsilon = 2\epsilon.$$
\end{proof}

Denote $u(s,\lambda)$ as $f_s(\lambda)$.  Without loss of generality suppose $\lambda \in [0,1]$ and that we repeatedly sample $\lambda_i$ from $U(0,1)$ and compute optimal strategy $s^*_i$ in the game $G_{\lambda_i}$. Then when we encounter $\lambda^*$ in real time, calculate $i^* = \argmin_{i} |\lambda^* - \lambda_i|$, and play
$s^*_{i^*}$. Suppose the game has $n = 2$ players, has $m$ actions per player, is zero sum, and that we take $t$ samples. Let $\epsilon_t$ denote the exploitability of $s^*_{i^*}$ in $G_{\lambda^*}$.

\begin{theorem}
\label{th:sampling}
If $f_s$ is continuous in $\lambda$ for all $s \in S$, then $\lim_{t \rightarrow \infty} E[\epsilon_t] = 0$.
\end{theorem}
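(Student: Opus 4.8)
The plan is to combine three ingredients: uniform continuity of the (finitely many) payoff functions, a coverage argument showing that some sampled $\lambda_i$ lands arbitrarily close to $\lambda^*$ with high probability, and Lemma~\ref{le:expl} to convert payoff-closeness into an exploitability bound.

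First I would fix $\epsilon > 0$ and exploit the fact that, with the player set and action sets held fixed, there are only finitely many pure strategy profiles $s$. Each $f_s$ is continuous on the compact set $[0,1]$, hence uniformly continuous and bounded; let $M$ bound $|f_s|$ uniformly over all $s$ and all $\lambda$ (so exploitability in any $G_\lambda$ never exceeds $2M$). Taking the minimum of the finitely many moduli of continuity, I would choose $\delta > 0$ so that $|\lambda - \lambda'| < \delta$ forces $|f_s(\lambda) - f_s(\lambda')| < \epsilon$ simultaneously for every pure profile $s$. Then whenever the nearest sample satisfies $|\lambda_{i^*} - \lambda^*| < \delta$, all payoffs of $G_{\lambda^*}$ are within $\epsilon$ of those of $G_{\lambda_{i^*}}$; since $s^*_{i^*}$ is a minimum-exploitability strategy (i.e.\ a Nash equilibrium) of $G_{\lambda_{i^*}}$, Lemma~\ref{le:expl} yields $\epsilon_t \le 2\epsilon$ on that event.

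Next I would bound the bad event. The $\lambda_i$ are i.i.d.\ $U(0,1)$, and $(\lambda^* - \delta, \lambda^* + \delta) \cap [0,1]$ has Lebesgue measure at least $\delta$ for every $\lambda^* \in [0,1]$, so the probability that no sample lands within $\delta$ of $\lambda^*$ is at most $(1-\delta)^t$. Combining with $0 \le \epsilon_t \le 2M$ gives
\[
E[\epsilon_t] \;\le\; 2\epsilon + 2M(1-\delta)^t ,
\]
and sending $t \to \infty$ shows $\limsup_t E[\epsilon_t] \le 2\epsilon$; since $\epsilon$ was arbitrary the limit is $0$. The bound is uniform in $\lambda^*$, so the same argument handles a random $\lambda^*$ as well.

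The step I expect to be the main obstacle is the first one: one has to pass carefully from the per-profile continuity hypothesis to a single $\delta$ that controls \emph{all} payoffs at once, which is exactly what makes the ``within $\epsilon$'' hypothesis of Lemma~\ref{le:expl} applicable --- this is where finiteness of $S$ (guaranteed because only payoffs, not action spaces, are parametrized in this theorem) is essential. One also has to observe that the object computed by the sampling procedure, an optimal strategy of $G_{\lambda_i}$ under the minimum-exploitability objective, is precisely a Nash equilibrium strategy, which is the hypothesis Lemma~\ref{le:expl} requires. The coverage estimate and the expectation split are then routine.
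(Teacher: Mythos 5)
Your proof is correct and follows essentially the same route as the paper: choose $\delta$ so that all payoffs of games with nearby parameters agree to within $\epsilon$, bound the probability that no sample lands within $\delta$ of $\lambda^*$ geometrically in $t$, and apply Lemma~\ref{le:expl} on the good event before splitting the expectation. You actually handle two details more carefully than the paper does --- you invoke uniform continuity of the finitely many $f_s$ on the compact domain to get a single $\delta$ valid for every $\lambda^*$ (the paper only takes a pointwise $\delta_s$), and you account for $\lambda^*$ near the boundary of $[0,1]$ via the measure-at-least-$\delta$ bound $(1-\delta)^t$ where the paper uses $(1-2\delta)^t$, which is not quite valid when $\lambda^*$ is within $\delta$ of an endpoint.
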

\begin{proof}
Let $\epsilon > 0$ be arbitrary, and set $\epsilon' = \frac{\epsilon}{3}.$ From continuity of $f_s$, there exists $\delta_s > 0$ such that $|f_s(\lambda) - f_s(\lambda')| < \epsilon'$ for all $\lambda'$ such that $|\lambda' - \lambda| < \delta_s.$ 
Let $\delta = \min_{s \in S} \delta_s.$ 
Let $T = \frac{\ln\left(\frac{\epsilon}{6-2\epsilon}\right)}{\ln(1-2\delta)}$, and let $t \geq T$ be arbitrary.

Suppose that $\lambda^*$ is the actual value of $\lambda$ encountered. The probability that at least one of the sampled values $\lambda_i$ satisfies $|\lambda_i - \lambda^*| \leq \delta$ equals $1-(1-2 \delta)^t.$ If this occurs, then $\epsilon_t = 2 \epsilon' = \frac{2\epsilon}{3}$ by Lemma~\ref{le:expl}. Otherwise, the exploitability is at most 2 (since we assume all payoffs are in [0,1]). So
$$E[\epsilon_t] \leq \frac{2\epsilon}{3}\left(1-(1-2 \delta)^t\right) + 2 (1-2 \delta)^t$$
$$= \frac{2\epsilon}{3} + \frac{(6-2\epsilon)(1-2 \delta)^t}{3}$$
$$\leq \frac{2\epsilon}{3} + \frac{(6-2\epsilon)(1-2 \delta)^T}{3}$$
$$= \frac{2\epsilon}{3} + \frac{(6-2\epsilon)\left(\frac{\epsilon}{6-2\epsilon}\right)}{3}$$
$$= \frac{2\epsilon}{3} + \frac{\epsilon}{3} = \epsilon.$$
\end{proof}

\begin{theorem}
Suppose that we have $t$ samples $\lambda_i$ producing exploitability $\epsilon_t$. Then $G_{\mathbf{\lambda}}$
is $(t,t,\epsilon_t)$-implementable using the minimum exploitability objective.
\end{theorem}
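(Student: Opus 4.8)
The plan is to observe that the nearest-neighbor sampling rule described just before the theorem is \emph{itself} a parametric decision list of depth $t$ and width $t-1\le t$. Once that is established, the claim is immediate: by construction $\epsilon_t$ is exactly the exploitability incurred by that rule at the realized parameter, so a list reproducing the rule achieves the objective value required in the definition of $(d,w,\epsilon)$-implementability for $d=w=t$ and $\epsilon=\epsilon_t$. Note no continuity assumption is needed here; the argument is purely constructive.

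First I would write the list down explicitly. For each $i\in\{1,\dots,t-1\}$, let the $i$-th condition assert that $\lambda_i$ is a closest sample to the realized parameter, i.e.\ $|\lambda-\lambda_i|\le|\lambda-\lambda_j|$ for every $j\neq i$. Squaring both (nonnegative) sides and cancelling $\lambda^2$ shows this is equivalent to the affine inequality $2(\lambda_j-\lambda_i)\lambda+\lambda_i^2-\lambda_j^2\le 0$ (and the same reduction works verbatim with a Euclidean distance if $\lambda$ is multidimensional). So I set $\mathbf{f_i}(\lambda)$ to be the $(t-1)$-vector with components $f_{i,j}(\lambda)=2(\lambda_j-\lambda_i)\lambda+\lambda_i^2-\lambda_j^2$ indexed by $j\in\{1,\dots,t\}\setminus\{i\}$, let $\mathbf{o_i}$ be the all-$\le$ operator vector of the same length $w=t-1$, and let the $i$-th output strategy be $s^*_i$, the minimum-exploitability strategy of $G_{\lambda_i}$. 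Finally let the default (``else'') strategy be $s^*_t$. Then $|F|=t-1$, so the depth $|S|=t$, and the width is $t-1\le t$, as required.

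Next I would verify that the strategy $s_L(\lambda)$ selected by this list coincides with the sampled strategy $s^*_{i^*}$ for $i^*=\argmin_i|\lambda-\lambda_i|$. A decision list evaluates its conditions in order, so $L$ outputs $s^*_i$ for the least index $i\le t-1$ whose condition holds, and condition $i$ holds precisely when $\lambda_i$ attains the minimum distance among all $t$ samples; hence $L$ outputs $s^*_{i^*}$ with ties broken toward the smallest index. If instead none of the conditions $1,\dots,t-1$ holds, then none of $\lambda_1,\dots,\lambda_{t-1}$ attains the minimum distance, so $\lambda_t$ is the unique closest sample and the default output $s^*_t$ is again the nearest-neighbor choice. (If the sampling procedure's $\argmin$ uses a different tie-breaking convention, one may relabel the samples or simply take $\epsilon_t$ to bound exploitability over all distance-minimizing choices; this changes nothing.)

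Since $s_L$ and the nearest-neighbor rule agree at every $\lambda$, and that rule has exploitability at most $\epsilon_t$ by hypothesis, the list $L$ achieves objective value within $\epsilon_t$ of optimal for all $\mathbf{\lambda}\in\mathbf{\Lambda}$; thus $G_{\mathbf{\lambda}}$ is $(t,t,\epsilon_t)$-implementable under the minimum exploitability objective. I expect the only genuinely delicate point to be the bookkeeping just described: confirming that the ordered evaluation of the conditions together with the default slot reproduces $\argmin$ exactly — in particular the case where the last sample is the winner — and reconciling the list's deterministic tie-break with whatever convention pins down $\epsilon_t$. Everything else is the one-line squaring identity and a count of depth and width.
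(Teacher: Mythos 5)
Your proposal is correct and takes essentially the same approach as the paper: both encode the nearest-neighbor rule as a decision list whose $i$-th condition asserts $|\lambda-\lambda_i|\le|\lambda-\lambda_j|$ for all $j$, output $s^*_i$ on that branch, and use the last sample's strategy as the default ``else'' case. Your reduction of the absolute-value comparison to an affine inequality by squaring, and your explicit handling of ties and of the case where the last sample is the winner, are minor refinements of the paper's construction (which writes the conditions directly as $|\lambda-\lambda_i|-|\lambda-\lambda_j|\le 0$), not a different argument.
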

\begin{proof}
We can construct parametric decision list $L$ as follows. First, we construct the function 
$f_1: \Lambda \rightarrow \mathbb{R}^k.$ 
The first component of $f_1$ is $|\lambda - \lambda_2| - |\lambda - \lambda_1|$ with operation $o_{11}$ $\leq$.
So in other words, this corresponds to the condition $|\lambda - \lambda_2| - |\lambda - \lambda_1| \leq 0.$
The $i$th component of $f_1$ is $|\lambda - \lambda_i| - |\lambda - \lambda_1|$, with $o_{1i}$ also $\leq.$
For the strategy, we set $s_i = s^*_i.$
These conditions put together tell us to play strategy $s_i$ if  $|\lambda - \lambda_1| \leq |\lambda - \lambda_i|$ for all $i > 1.$

In general the $i$th component of $f_j$ is $|\lambda - \lambda_i| - |\lambda - \lambda_j|$, with $o_{ji}$ $\leq$ and
$s_j = s^*_j.$ 
We can omit the final set of conditions for $f_t$ since it is implied by the first $t-1$ sets of conditions all failing, and output $s_t = s^*_t$ as the default strategy at depth $t.$
The constructed parametric decision list $L$ has depth and width $t$, and produces a strategy with exploitability of
$\epsilon_t$.
\end{proof}

\section{Comparison of approaches in 2 x 2 games}
In general it may be challenging to construct a small parametric decision list that achieves an approximately optimal value of the objective function. Similarly, for the sampling approach we may require a large number of samples to obtain a small approximation error. The sampling approach could be improved by first sampling as many values for the parameters as possible, then clustering the games generated (e.g., using k-means) into $k$ clusters. We then implement the strategy corresponding to the parameter values from the cluster mean with smallest distance from the parameters we encounter in real time. This approach would require an effective distance metric between parameter vectors that can be efficiently computed, while such a metric is not required for the parametric decision list approach. Furthermore, it may be challenging to determine the optimal value of $k$, and there is no guarantee that this approach with $k$ clusters will produce small error.  

In this section we compare the two approaches for the problem of computing Nash equilibrium strategies (i.e., using the objective of minimizing exploitability) in two-player zero-sum strategic-form games with two pure strategies per player. We can represent a two-player $2 \times 2$ game as a matrix $M$ depicted in Equation~\ref{eq:matrix}, where the parameters correspond to the payoffs of the row and column players. For general-sum games we can view this game as having 8 parameters ($a$--$h$), while for zero-sum games there are 4, since $b = -a, d = -c, f = -e, h = -g.$ While we can easily solve a specific game given the payoff parameters, we seek to construct a small set of rules that allow a human to easily obtain a solution for arbitrary parameter values.  

%Matching pennies -- both approaches work perfectly.

%Use k-means clustering

%Matching pennies with A vs. B

\begin{equation}
\label{eq:matrix}
M = 
\begin{bmatrix}
(a,b) & (c,d) \\
(e,f) & (g,h) \\
\end{bmatrix}
\end{equation}

We first explore the sampling approach. We generated 100,000 $2 \times 2$ 2-player zero-sum games with payoffs for the row player chosen uniformly at random in $[-1,1]$. We then used the first $k$ games from this set as training data, for various values of $k$. For each value of $k$, we generated 10,000 new games with uniform random payoffs. For each test game, we determine which of the $k$ training games is ``closest.'' For the distance metric, we use the L2-norm over vectors for the values $(a,c,e,g).$ We then compute the exploitability of the previously-computed equilibrium strategies from the closest training game in the new test game. The average exploitability over the 10,000 games is plotted as a function of $k$ in Figure~\ref{fi:2pzs}. From the figure we can see that as the number of training games gets large the average exploitability approaches zero, as expected. Surprisingly, training on just the first two games actually produces a very small average exploitability of 0.0129, while training on all 100,000 produces exploitability 0.0077. The exploitability for 3 training games (0.068) is significantly higher than that for 2, and jumps up sharply to a peak of 0.159 for 20 training games before descending towards zero. This erratic behavior shows that, on the one hand, the L2 distance metric has limitations for this problem and does not lead average exploitability to decrease monotonically with number of training games as we may expect. However, it also shows that it may be possible to generate a very small sample of training games (just 2) that produces a very low average exploitability.             

\begin{figure}[!ht]
\centering
\includegraphics[width=0.48\textwidth]{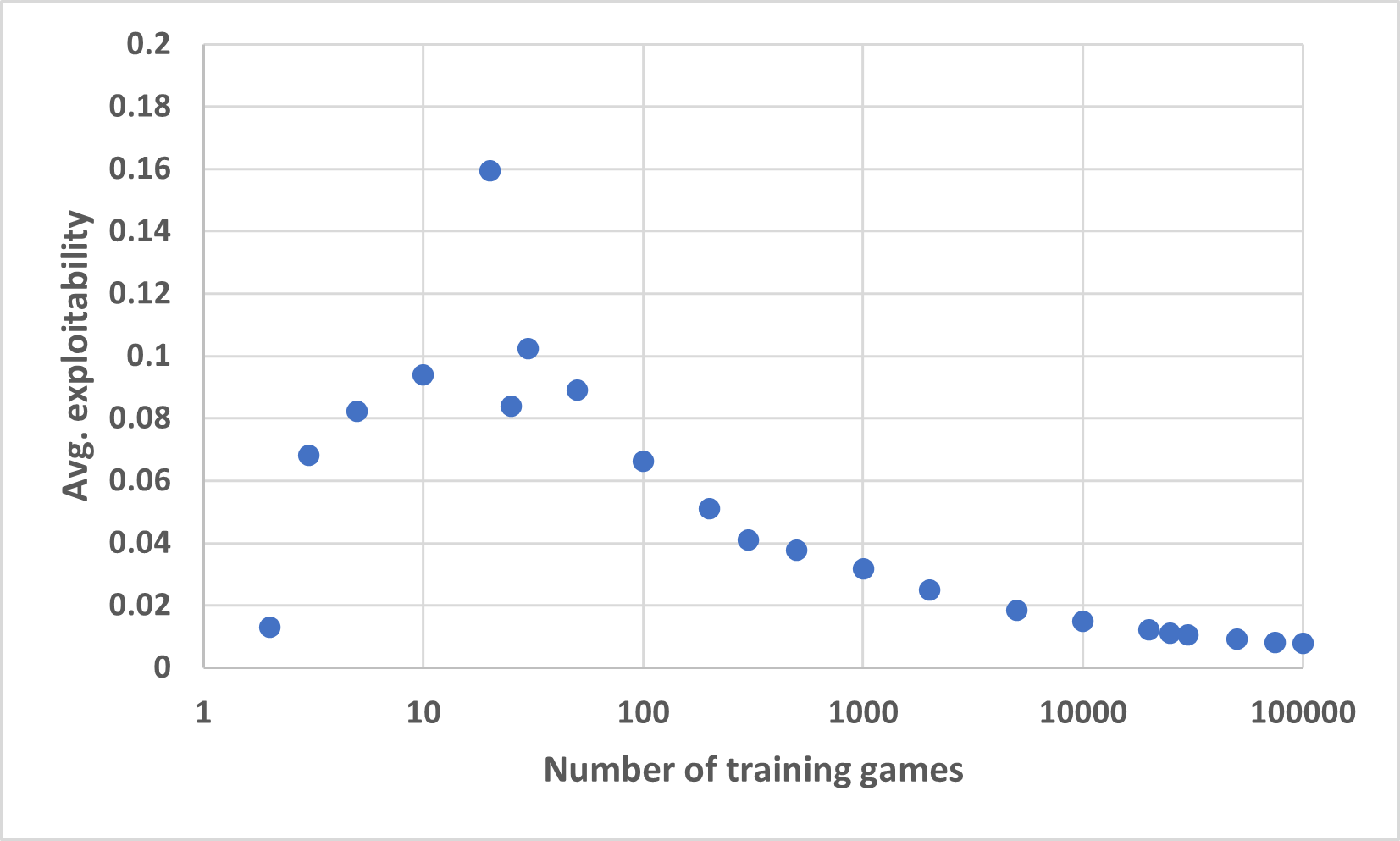}
\caption{Exploitability vs. number of training games for two-player zero-sum games with uniform-random payoffs in [-1,1], with results averaged over 10,000 games for each number of training games.}
\label{fi:2pzs}
\end{figure}

As it turns out, there actually exists a small parametric decision list (PDL) that computes an exact Nash equilibrium for $2 \times 2$ two-player general-sum strategic-form games (we can view zero-sum games as a special case). This is depicted below, using the notation for the parameters from Equation~\ref{eq:matrix}. We actually output equilibrium strategies for both players 1 and 2 in our PDL, though in practice we would only need to specify the strategy for the one player we are interested in. The final condition outputs a mixed strategy for both players, where the row player plays his strategies with probability $p$ and $1-p$ while the column player plays his strategies with probability $q$ and $1-q$. This PDL provides a $(4,2,0)$ implementation of the problem of computing a Nash equilibrium in this game class. A proof of correctness is in the appendix.   

\small
\begin{itemize}
\item If $a \geq e$ and $b \geq d$ then (1,1).
\item Else if $c \geq g$ and $d \geq b$ then (1,2).
\item Else if $e \geq a$ and $f \geq h$ then (2,1).
\item Else if $g \geq c$ and $h \geq f$ then (2,2).
\item Else $((p,1-p),(q,1-q))$ for $p = \frac{h-f}{b-f+h-d}$, $q = \frac{g-c}{a-c+g-e}$.
\end{itemize}
\normalsize

\section{Parametrized game examples}
In this section we present several examples of realistic games that depict various forms of our model.  In Section~\ref{se:jeopardy}, we present a game we call Simplified Final Jeopardy, which is a simplified two-player variant of the problem of determining how much to wager in final jeopardy (we can view it as the three-player version in which the third player has \$0). The three-player version is played for large amounts of money on the popular game show. We will assume that the player balances are fixed. Our model has 2 parameters which denote assessments of the probabilities that each player will answer correctly. We can assume that the assessments are based on observations of play throughout the game, as well as the category. These parameters affect the payoffs of the players. So the game exemplifies payoff parametrization. This game is two-player zero-sum, and we use the Nash equilibrium approximation/minimum exploitability objective.

In Section~\ref{se:kuhn} we consider a generalization of a simplified poker game that has been widely studied. Kuhn poker was one of the first games studied by game theorists~\cite{Kuhn50:Simplified}.  More recently, it has received significant attention in the artificial intelligence community as a tractable test problem for equilibrium-finding~\cite{Abou10:Using,Ganzfried10:Computing,Gordon05:No-Regret,Hawkin11:Automated,Koller95:Generating,Koller97:Representations} and opponent-exploitation~\cite{Hoehn05:Effective} algorithms. In the standard version, there are two players, each dealt one card from a three-card deck. We consider a variant in which the deck has $n$ cards. (Previously a version with a 13-card deck has been studied~\cite{Gordon13:One-card}.) The cards represent private information to the players, and therefore the game exemplifies information parametrization, though no strategy or payoff parametrization. %Again this is a two-player zero-sum game and use the minimum exploitability objective.

Finally in Section~\ref{se:weakest-link}, we study a game model based on the game show Weakest Link. In the Weakest Link game show, eight contestants answer a series of trivia questions to accumulate a ``bank'' of money, with one contestant (the ``weakest link'') voted off at each round. When there are 2 contestants remaining, they face off for a series of five questions each, with the winner receiving the entire amount that was banked. In theory the champion could win ``up to a million dollars,'' though in practice the total bank ends up being in the \$50k--100k range. When three contestants remain, the players face an interesting strategic decision of deciding whom to vote for. Our model has five parameters: the total amount in the bank, assessments of the probability of winning against each opponent in the final round, and assessments of voting strategies of the opposing players (the probabilities that they will vote for each player). Thus, this is a three-player game using the opponent exploitation metric with strategy parametrization over the opponents' strategies as well as payoff parametrization.

These three games exemplify several of the different types of parametrization we have discussed: payoff, strategy, information, and opponent strategy. They exemplify the main objectives we have considered: minimizing exploitability for two-player zero-sum games, and maximizing opponent exploitation in multiplayer games. They also exemplify several different game classes (two-player zero-sum, imperfect-information, and multiplayer). These are simplified models of real popular games that are frequently played for large amounts of money. The purpose of these examples is to demonstrate the realistic applicability of the new concepts and frameworks we have presented. 

In each of the games human players must make their decision under extreme time pressure in real time without any computational assistance (though of course they can prepare a strategy in advance). While the optimal strategy can be computed easily in advance for fixed values of the parameters, the players must be prepared to face any possible values for the parameters. It turns out that for the games we consider we are able to construct parametric decision lists with small width and depth that exactly solve the problem based on the derivation of closed-form solutions. While in general larger games will clearly often not have closed-form solutions, the new concepts and frameworks can still be applied, though they may require the development of new focused algorithms. 

%For each of the games we consider, we present the rules, as well as a small PDL that exactly optimizes the objective. Full derivations and additional analysis are in the appendix. 

%output can depend on parameters
%Strategy spaces depend on parameters
\subsection{Final Jeopardy}
\label{se:jeopardy}
In the simplified final jeopardy game, two players each have an amount $X_i$ and must select a non-negative amount $w_i \leq X_i$ to wager. We will assume that $X_1 = 5$, $X_2 = 3$, and the $w_i$ must be non-negative integers.  The player who finishes with a higher amount wins and obtains payoff 1, while the losing player obtains payoff 0 (we can then subtract 0.5 from each payoff to make the game zero sum). If there is a tie, then we assume each player obtains payoff 0.5. Finally, there are parameters $p_i$ that denote the probability that the players expect player $i$ to correctly answer the question. We assume that these values are correct and are common knowledge. 

For specific fixed values of the parameters $p_1,p_2$, the game is a two-player zero-sum strategic form game, and can be solved easily using standard algorithms. But such an approach is not helpful for a human player who must be prepared to be able to quickly construct his strategy in real time for any possible values of the parameters. The parametrized game is a $6 \times 4$ strategic-form game where the payoffs are functions of the parameters, and it is not obvious how to compute equilibrium strategies for all possible parameter values. As it turns out, we can construct the following small PDL which determines exact equilibrium strategies for all values of the parameters for player 1 (see appendix).

\begin{itemize}
\item If $p_2 = 0$ wager 0.
\item Else if $p_1 = 0$ wager 0.
\item Else if $p_1 = 1$ wager 2.
\item Else if $p_2 \geq \frac{1}{2}$ wager 2.
\item Otherwise wager 1 with probability $x = \frac{(1-p_1)(1-2p_2)}{1-p_1+p_1p_2}$ and wager 2 with probability $1-x.$
\end{itemize}

The PDL for player 2 is the following:
\begin{itemize}
\item If $p_2 = 0$ wager 0.
\item Else if $p_1 = 0$ wager 3.
\item Else if $p_1 = 1$ wager 0.
\item Else if $p_1 \geq \frac{1}{2}$ and $p_2 \geq \frac{1}{2}$ wager 2.
\item Else if $p_2 \geq \frac{1}{2}$ wager 3.
\item Otherwise wager 0 with probability $y = \frac{p_1 p_2}{1+p_1p_2 - p_1}$ and wager 3 with probability $1-y.$
\end{itemize}

\subsection{Generalized Kuhn Poker}
\label{se:kuhn}
The rules of three-card Kuhn poker are as follows:

\small
\begin{itemize}
\item Two players: $A$ and $B$
\item Both players ante \$1
\item Deck containing three cards: 1, 2, and 3
\item Each player is dealt one card uniformly at random
\item Player $A$ acts first and can either bet \$1 or check
\begin{itemize}
\item If $A$ bets, player $B$ can call or fold
\begin{itemize}
\item If $A$ bets and $B$ calls, then whoever has the higher card wins the \$4 pot
\item If $A$ bets and $B$ folds, then $A$ wins the entire \$3 pot
\end{itemize}
\item If $A$ checks, $B$ can bet \$1 or check.
\begin{itemize}
\item If $A$ checks and $B$ bets, then $A$ can call or fold.
\begin{itemize}
\item If $A$ checks, $B$ bets, and $A$ calls, then whoever has the higher card wins the \$4 pot
\item If $A$ checks, $B$ bets, and $A$ folds, B wins the \$3 pot
\end{itemize}
\item If $A$ checks and $B$ checks, then whoever has the higher card wins the \$2 pot
\end{itemize}
\end{itemize}
\end{itemize}
\normalsize

An analysis of the equilibria is provided in the appendix. The equilibrium strategies contain some elements of deceptive behavior, as is present in larger variants of poker.  For example, player $A$ sometimes checks with a 3 as a \emph{trap} or \emph{slowplay}, and both players sometimes bet with a 1 as a \emph{bluff}. 

Generalized Kuhn poker has the same rules as standard Kuhn poker except that the deck contains $n$ cards instead of 3.  We will denote the game with $n$ cards by $G_n.$  
We compute exact Nash equilibrium strategies in $G_n$ for the first time, which are presented in Appendix~\ref{ap:kuhn}. We show that they are almost entirely representable by a PDL, with the exception of a single boundary probability requiring a cumulative calculation.
While we do not present a theoretical proof of correctness, we empirically verified that this is an exact Nash equilibrium (up to $10^{-14}$ tolerance) for all $n \leq 500.$

%Multiple players, opponent exploitation
\subsection{Weakest Link}
\label{se:weakest-link}
Our final example game is a model for the final voting round (3 players remain) in the show Weakest Link. Suppose that the total amount of money to be awarded to the winner is $W>0$ (the loser gets \$0). Suppose that if you are head-to-head against opponent 1 you will win with probability $p_1$, and against opponent 2 you will win with probability $p_2$. Assume that player 2 is stronger than player 1, so that $p_1 > p_2$. Finally, assume that player 1 will vote for you with probability $y_1$ (and therefore will vote for player 2 with probability 1-$y_1$), and that player 2 will vote for you with probability $y_2$. We will assume that clearly no player will vote for themselves. %See the appendix for further details.

If there is a three-way tie, we will assume that each player is voted out with probability 1/3. (In reality, the ``statistically strongest link'' from the previous round gets to cast a tie-breaking vote, but we will ignore this aspect of the problem to simplify the analysis.) Under this assumption, our expected payoff in the case of a tie equals
$1/3 (W * p_1) + 1/3 (W * p_2) + 1/3 * 0 = W(p_1 + p_2)/3.$

Using this game model, our analysis (provided in the appendix) shows that we should vote for player 1 if
the following condition is met (and otherwise should vote for player 2):
$$2y_1p_2 + y_2p_2 + 3y_1y_2p_1 \geq 2y_2p_1 + y_1p_1 + 3y_1y_2p_2.$$
This constitutes an optimal depth 1 PDL for the objective.

%If $p_2 \leq \frac{p_1}{2}$, then it is always optimal to vote for player 2 (the stronger player) regardless of your beliefs of the strategies taken by the other players.

\section{Related and future research}
While the majority of research on game-solving algorithms has assumed that all game parameters are fixed in advance and a computer agent will be implementing the strategies, some recent work has explored the construction of human-understandable strategy rules. One paper showed that equilibrium strategies for endgames in 2-player limit Texas hold 'em conform to one of three qualitative models, which enabled improved equilibrium computation algorithms~\cite{Ganzfried10:Computing}. Recent work in other imperfect-information poker games has applied machine learning algorithms (decision trees and regression) to compute human-understandable rules for fundamental situations: when a player should make a very large or small bet, and when a player should call a bet by the opponent~\cite{Ganzfried17b:Computing,Ganzfried20:Most}. There has also been prior study of parametrized games in the game theory literature~\cite{Page15:Parameterized,Flesch20:Parameterized}. 

%\section{Future research}
While we have constructed optimal PDLs analytically for several example games, this may not be possible in general. In the future we plan to explore algorithms for computing small PDLs that achieve low objective error. Given the similarities to decision trees and decision lists, algorithms for computing those models may be useful for PDLs~\cite{Angelino18:Learning}. We would also like to perform experiments on human subjects to determine the practical usefulness of the PDL representation for human strategic decision-making in realistic complex games.

%% The file named.bst is a bibliography style file for BibTeX 0.99c
%\clearpage
\bibliographystyle{plain}
\bibliography{C://FromBackup/Research/refs/dairefs}

\begin{thebibliography}{10}

\bibitem{Abou10:Using}
Nick {Abou Risk} and Duane Szafron.
\newblock Using counterfactual regret minimization to create competitive
  multiplayer poker agents.
\newblock In {\em {Proceedings of the International Conference on Autonomous
  Agents and Multi-Agent Systems (AAMAS)}}, 2010.

\bibitem{Angelino18:Learning}
Elaine Angelino, Nicholas Larus-Stone, Daniel Alabi, Margo Seltzer, and Cynthia
  Rudin.
\newblock Learning certifiably optimal rule lists for categorical data.
\newblock {\em Journal of Machine Learning Research}, 18(234):1--78, 2018.

\bibitem{Ankenman06:Mathematics}
Jerrod Ankenman and Bill Chen.
\newblock {\em The Mathematics of Poker}.
\newblock {ConJelCo LLC}, Pittsburgh, PA, USA, 2006.

\bibitem{Brown17:Superhuman}
Noam Brown and Tuomas Sandholm.
\newblock Superhuman {AI} for heads-up no-limit poker: Libratus beats top
  professionals.
\newblock {\em Science}, 359:418--424, 2017.

\bibitem{Brown19:Superhuman}
Noam Brown and Tuomas Sandholm.
\newblock Superhuman {AI} for multiplayer poker.
\newblock {\em Science}, 365:885--890, 2019.

\bibitem{Flesch20:Parameterized}
J\'{a}nos Flesch and Arkadi Predtetchinski.
\newblock {Parameterized games of perfect information}.
\newblock {\em Annals of Operations Research}, 287(2):683--699, April 2020.

\bibitem{Ganzfried20:Most}
Sam Ganzfried and Max Chiswick.
\newblock Most important fundamental rule of poker strategy.
\newblock In {\em Proceedings of the {F}lorida {A}rtificial {I}ntelligence
  {R}esearch {S}ociety {C}onference (FLAIRS)}, 2020.

\bibitem{Ganzfried10:Computing}
Sam Ganzfried and Tuomas Sandholm.
\newblock Computing equilibria by incorporating qualitative models.
\newblock In {\em {Proceedings of the International Conference on Autonomous
  Agents and Multi-Agent Systems (AAMAS)}}, 2010.

\bibitem{Ganzfried17b:Computing}
Sam Ganzfried and Farzana Yusuf.
\newblock Computing human-understandable strategies: {D}educing fundamental
  rules of poker strategy.
\newblock {\em Games}, 8(4):1--13, 2017.

\bibitem{Gordon13:One-card}
Geoff Gordon.
\newblock One-card poker.
\newblock \url{http://www.cs.cmu.edu/~ggordon/poker/}, 2013.

\bibitem{Gordon05:No-Regret}
Geoffrey~J. Gordon.
\newblock No-regret algorithms for structured prediction problems.
\newblock Technical Report CMU-CALD-05-112, Carnegie Mellon University, 2005.

\bibitem{Hawkin11:Automated}
John Hawkin, Robert Holte, and Duane Szafron.
\newblock Automated action abstraction of imperfect information extensive-form
  games.
\newblock In {\em {Proceedings of the AAAI Conference on Artificial
  Intelligence (AAAI)}}, pages 681--687, 2011.

\bibitem{Hoehn05:Effective}
Bret Hoehn, Finnegan Southey, Robert~C. Holte, and Valeriy Bulitko.
\newblock Effective short-term opponent exploitation in simplified poker.
\newblock In {\em Proceedings of the National Conference on Artificial
  Intelligence (AAAI)}, pages 783--788, 2005.

\bibitem{Koller95:Generating}
Daphne Koller and Avi Pfeffer.
\newblock Generating and solving imperfect information games.
\newblock In {\em Proceedings of the Fourteenth International Joint Conference
  on Artificial Intelligence (IJCAI)}, pages 1185--1192, Montreal, Canada,
  1995.

\bibitem{Koller97:Representations}
Daphne Koller and Avi Pfeffer.
\newblock Representations and solutions for game-theoretic problems.
\newblock {\em Artificial Intelligence}, 94(1):167--215, July 1997.

\bibitem{Kuhn50:Simplified}
H.~W. Kuhn.
\newblock Simplified two-person poker.
\newblock In H.~W. Kuhn and A.~W. Tucker, editors, {\em Contributions to the
  Theory of Games}, volume~1 of {\em Annals of Mathematics Studies, 24}, pages
  97--103. Princeton University Press, Princeton, New Jersey, 1950.

\bibitem{Moravcik17b:DeepStack}
Matej Morav{\v c}{\'\i}k, Martin Schmid, Neil Burch, Viliam Lis{\'y}, Dustin
  Morrill, Nolan Bard, Trevor Davis, Kevin Waugh, Michael Johanson, and Michael
  Bowling.
\newblock Deepstack: Expert-level artificial intelligence in heads-up no-limit
  poker.
\newblock {\em Science}, 356:508--513, 2017.

\bibitem{Page15:Parameterized}
Frank Page.
\newblock {Parameterized games, minimal Nash correspondences, and
  connectedness}.
\newblock LSE Research Online Documents on Economics 65102, London School of
  Economics and Political Science, LSE Library, September 2015.

\end{thebibliography}

\appendix
\section{Uniform random 2x2 two-player strategic-form games}
\label{ap:random} 
Consider the game defined by matrix $M$ in Equation~\ref{eq:matrix2}.
If the top-left cell is a pure-strategy equilibrium, then we must have
$a \geq e$ and $b \geq d$. The analysis is identical for the other pure
strategy profiles. Next, suppose there is a Nash equilibrium where one player
has support of size 2 and the other player has support of size 1.
Without loss of generality, suppose the column player's strategy has support 
of size 1 (Left) and the row player's strategy has support of size 2 (suppose it
puts probability $p$ on Top and $1-p$ on Bottom). Player 1 must be indifferent
between his two strategies, so $a = e.$ Player 2 cannot prefer R to L, so 
$$pb + (1-b)f \geq pd + (1-b) h.$$
If $b < d$ and $f < h$, then we would have
$$pb + (1-b)f < pd + (1-b) h,$$
which produces a contradiction. So we must have $b \geq d$ or $f \geq h$.
If $b \geq d$, then $(T,L)$ is a pure-strategy equilibrium, and if $f \geq h$
then $(B,L)$ is a pure-strategy equilibrium. So the existence of a Nash equilibrium
with support size 1 for one player and 2 for the other player implies the existence
of a pure-strategy Nash equilibrium. Therefore, if no pure-strategy Nash equilibrium
exists, then there must be a Nash equilibrium where both players' strategies have support
size 2. In this case it can be shown straightforwardly that the row player plays T with probability
$\frac{h-f}{b-f+h-d}$ and column player plays L with probability $\frac{g-c}{a-c+g-e}$.
This produces the PDL given below.
 
\begin{equation}
\label{eq:matrix2}
M = 
\begin{bmatrix}
(a,b) & (c,d) \\
(e,f) & (g,h) \\
\end{bmatrix}
\end{equation}

\begin{itemize}
\item If $a \geq e$ and $b \geq d$ then (1,1).
\item Else if $c \geq g$ and $d \geq b$ then (1,2).
\item Else if $e \geq a$ and $f \geq h$ then (2,1).
\item Else if $g \geq c$ and $h \geq f$ then (2,2).
\item Else $((p,1-p),(q,1-q))$ for $p = \frac{h-f}{b-f+h-d}$, $q = \frac{g-c}{a-c+g-e}$.
\end{itemize}

\section{Simplified Two-Player Final Jeopardy}
\label{ap:jeopardy}
In the two-player Final Jeopardy game, players have an amount $X_i$ and each player must select a non-negative amount $w_i \leq X_i$ to wager, where $X_i$ and $w_i$ are non-negative integers.  The player who finishes with a higher amount wins and obtains payoff 1, while the losing player obtains payoff 0 (we can then subtract 0.5 from each payoff to make the game zero sum). If there is a tie, then we assume each player obtains payoff 0.5. Finally, there are parameters $p_i$ that denote the probability that the players expect player $i$ to correctly answer the question. We assume that these values are correct and are common knowledge. 

In the simplified version we consider, the values $X_1 = 5$, $X_2 = 3$ are fixed. For specific fixed values of the parameters $p_1,p_2$, the game is two-player zero-sum strategic form game, and can be solved easily using standard algorithms. But such an approach is not helpful for a human player who must be prepared to be able to quickly construct his strategy in real time for any possible values of $p_1$ and $p_2$.  

\begin{itemize}
\item If player 1 wagers 0 and player 2 wagers 0, then player 1 wins with probability 1.
So player 1's expected payoff is $1 - 0.5 = 0.5.$
(Note that we are counting a win as having payoff 0.5, a loss as having payoff -0.5, and tie as having payoff 0, so that the game is zero sum.) 

\item If player 1 wagers 0 and player 2 wagers 1, then player 1 also wins with probability 1.
So player 1's expected payoff is $1 - 0.5 = 0.5.$

\item If player 1 wagers 0 and player 2 wagers 2, then player 1 wins with probability $1-p_2$, and the players tie with probability $p_2.$ So player 1's expected payoff is $1-p_2 + 0.5p_2 - 0.5 = 0.5 - 0.5p_2.$

\item If player 1 wagers 0 and player 2 wagers 3, then player 1 wins with probability $1-p_2$, and player 2 wins with probability $p_2.$ So player 1's expected payoff is $1-p_2 - 0.5 = 0.5 - p_2.$

\item If player 1 wagers 1 and player 2 wagers 0, then player 1 wins with probability 1.
So player 1's expected payoff is $1 - 0.5 = 0.5.$

\item If player 1 wagers 1 and player 2 wagers 1, then player 1 wins with probability $p_1 + (1-p_1)(1-p_2)$, and the players tie
with probability $(1-p_1)p_2.$
So player 1's expected payoff is 
$$p_1 + (1-p_1)(1-p_2) + 0.5 (1-p_1)p_2 - 0.5 = p_1 + 1 - p_1 - p_2 + p_1p_2 + 0.5p_2 - 0.5p_1p_2 - 0.5 = 0.5p_1p_2 - 0.5p_2 + 0.5.$$

\item If player 1 wagers 1 and player 2 wagers 2, then player 1 wins with probability $p_1 + (1-p_1)(1-p_2)$, and player 2 wins with
probability $(1-p_1)p_2.$
So player 1's expected payoff is 
$$p_1 + (1-p_1)(1-p_2) - 0.5 = p_1 + 1 - p_1 - p_2 - 0.5 = 0.5 - p_2.$$

\item If player 1 wagers 1 and player 2 wagers 3, then player 1 wins with probability $1-p_2$, player 2 wins with probability $(1-p_1)p_2$, and the players tie with probability $p_1 p_2.$
So player 1's expected payoff is
$$1-p_2 + 0.5 p_1 p_2 - 0.5 = 0.5 + 0.5p_1p_2 - p_2.$$

\item If player 1 wagers 2 and player 2 wagers 0, then player 1 wins with probability $p_1$, and the players tie with probability $1-p_1.$ So player 1's expected payoff is 
$$p_1 + 0.5(1-p_1) - 0.5 = p_1 + 0.5 - 0.5p_1 - 0.5 = 0.5p_1.$$

\item If player 1 wagers 2 and player 2 wagers 1, then player 1 wins with probability $p_1 + (1-p_1)(1-p_2)$, and player 2 wins with probability $(1-p_1)p_2.$
So player 1's expected payoff is 
$$p_1 + (1-p_1)(1-p_2) - 0.5 = p_1 + 1 - p_1 - p_2 + p_1p_2 - 0.5 = 0.5 - p_2 + p_1p_2.$$

\item If player 1 wagers 2 and player 2 wagers 2, then player 1 wins with probability $p_1 + (1-p_1)(1-p_2)$, and player 2 wins with probability $(1-p_1)p_2.$
So player 1's expected payoff is 
$$p_1 + (1-p_1)(1-p_2) - 0.5 = p_1 + 1 - p_1 - p_2 + p_1p_2 - 0.5 = 0.5 - p_2 + p_1p_2.$$

\item If player 1 wagers 2 and player 2 wagers 3, then player 1 wins with probability $p_1 + (1-p_1)(1-p_2)$, and player 2 wins with probability $(1-p_1)p_2.$
So player 1's expected payoff is 
$$p_1 + (1-p_1)(1-p_2) - 0.5 = p_1 + 1 - p_1 - p_2 + p_1p_2 - 0.5 = 0.5 - p_2 + p_1p_2.$$

\item If player 1 wagers 3 and player 2 wagers 0, then player 1 wins with probability $p_1$, and player 2 wins with probability $1-p_1.$ So player 1's expected payoff is $p_1 - 0.5.$

\item If player 1 wagers 3 and player 2 wagers 1, then player 1 wins with probability $p_1$, player 2 wins with probability $(1-p_1)p_2,$ and the players tie with probability $(1-p_1)(1-p_2).$
So player 1's expected payoff is 
$$p_1 + 0.5(1-p_1)(1-p_2) - 0.5 = p_1 + 0.5 - 0.5p_1 - 0.5p_2 + 0.5p_1p_2 - 0.5 = 0.5p_1p_2 + 0.5p_1 - 0.5p_2.$$

\item If player 1 wagers 3 and player 2 wagers 2, then player 1 wins with probability $p_1 + (1-p_1)(1-p_2)$, and player 2 wins with probability $(1-p_1)p_2.$
So player 1's expected payoff is 
$$p_1 + (1-p_1)(1-p_2) - 0.5 = p_1 + 1 - p_1 - p_2 + p_1p_2 - 0.5 = 0.5 - p_2 + p_1p_2.$$

\item If player 1 wagers 3 and player 2 wagers 3, then player 1 wins with probability $p_1 + (1-p_1)(1-p_2)$, and player 2 wins with probability $(1-p_1)p_2.$
So player 1's expected payoff is 
$$p_1 + (1-p_1)(1-p_2) - 0.5 = p_1 + 1 - p_1 - p_2 + p_1p_2 - 0.5 = 0.5 - p_2 + p_1p_2.$$

\item If player 1 wagers 4 and player 2 wagers 0, then player 1 wins with probability $p_1$, and player 2 wins with probability $1-p_1.$ So player 1's expected payoff is $p_1 - 0.5.$

\item If player 1 wagers 4 and player 2 wagers 1, then player 1 wins with probability $p_1$, and player 2 wins with probability $1-p_1.$ So player 1's expected payoff is $p_1 - 0.5.$

\item If player 1 wagers 4 and player 2 wagers 2, then player 1 wins with probability $p_1$, player 2 wins with probability $(1-p_1)p_2,$ and the players tie with probability $(1-p_1)(1-p_2).$
So player 1's expected payoff is 
$$p_1 + 0.5(1-p_1)(1-p_2) - 0.5 = p_1 + 0.5 - 0.5p_1 - 0.5p_2 + 0.5p_1p_2 - 0.5 = 0.5p_1p_2 + 0.5p_1 - 0.5p_2.$$

\item If player 1 wagers 4 and player 2 wagers 3, then player 1 wins with probability $p_1 + (1-p_1)(1-p_2)$, and player 2 wins with probability $(1-p_1)p_2.$
So player 1's expected payoff is 
$$p_1 + (1-p_1)(1-p_2) - 0.5 = p_1 + 1 - p_1 - p_2 + p_1p_2 - 0.5 = 0.5 - p_2 + p_1p_2.$$

\item If player 1 wagers 5 and player 2 wagers 0, then player 1 wins with probability $p_1$, and player 2 wins with probability $1-p_1.$ So player 1's expected payoff is $p_1 - 0.5.$

\item If player 1 wagers 5 and player 2 wagers 1, then player 1 wins with probability $p_1$, and player 2 wins with probability $1-p_1.$ So player 1's expected payoff is $p_1 - 0.5.$

\item If player 1 wagers 5 and player 2 wagers 2, then player 1 wins with probability $p_1$, and player 2 wins with probability $1-p_1.$ So player 1's expected payoff is $p_1 - 0.5.$

\item If player 1 wagers 5 and player 2 wagers 3, then player 1 wins with probability $p_1$, player 2 wins with probability $(1-p_1)p_2,$ and the players tie with probability $(1-p_1)(1-p_2).$
So player 1's expected payoff is 
$$p_1 + 0.5(1-p_1)(1-p_2) - 0.5 = p_1 + 0.5 - 0.5p_1 - 0.5p_2 + 0.5p_1p_2 - 0.5 = 0.5p_1p_2 + 0.5p_1 - 0.5p_2.$$
\end{itemize}

The game corresponds to the following payoff matrix, where the payoffs are for player 1 (we assume that player 1 is the row player and player 2 is the column player).

\begin{equation}
\begin{bmatrix}
0.5 &0.5 &0.5-0.5p_2 &0.5-p_2\\
0.5 &0.5p_1p_2 - 0.5p_2 + 0.5 &0.5-p_2 &0.5+0.5p_1p_2-p_2 \\
0.5p_1 &0.5 - p_2 + p_1p_2 &0.5 - p_2 + p_1p_2 &0.5 - p_2 + p_1p_2\\
p_1-0.5 &0.5p_1p_2 + 0.5p_1 - 0.5p_2 &0.5 - p_2 + p_1p_2 &0.5 - p_2 + p_1p_2\\
p_1-0.5 &p_1-0.5 &0.5p_1p_2 + 0.5p_1 - 0.5p_2 &0.5 - p_2 + p_1p_2\\
p_1-0.5 &p_1-0.5 &p_1-0.5 &0.5p_1p_2 + 0.5p_1 - 0.5p_2\\
\end{bmatrix}
\end{equation}

The payoff matrix for player 2 is the following (it is the same matrix with all payoffs negated):

\begin{equation}
\begin{bmatrix}
-0.5 &-0.5 &-0.5+0.5p_2 &-0.5+p_2\\
-0.5 &-0.5p_1p_2 + 0.5p_2 - 0.5 &-0.5+p_2 &-0.5-0.5p_1p_2+p_2 \\
-0.5p_1 &-0.5 + p_2 - p_1p_2 &-0.5 + p_2 - p_1p_2 &-0.5 + p_2 - p_1p_2\\
-p_1+0.5 &-0.5p_1p_2 - 0.5p_1 + 0.5p_2 &-0.5 + p_2 - p_1p_2 &-0.5 + p_2 - p_1p_2\\
-p_1+0.5 &-p_1+0.5 &-0.5p_1p_2 - 0.5p_1 + 0.5p_2 &-0.5 + p_2 - p_1p_2\\
-p_1+0.5 &-p_1+0.5 &-p_1+0.5 &-0.5p_1p_2 - 0.5p_1 + 0.5p_2\\
\end{bmatrix}
\end{equation}

\begin{itemize}
\item (0,0) is a Nash equilibrium if $p_2 = 0.$
\item Else (0,3) is a Nash equilibrium if $p_1 = 0.$
\item Else (2,0) is a Nash equilibrium if $p_1 = 1.$
\item Else (2,2) is a Nash equilibrium if $p_1 \geq \frac{1}{2}, p_2 \geq \frac{1}{2}.$
\item Else (2,3) is a Nash equilibrium if $p_1 < \frac{1}{2}, p_2 \geq \frac{1}{2}.$
\item Else P1 wagers 1 with probability $x = \frac{(1-p_1)(1-2p_2)}{1-p_1+p_1p_2}$ and wagers 2 with probability $1-x$,
and P2 wagers 0 with probability $y = \frac{p_1 p_2}{1+p_1p_2 - p_1}$ and 3 with probability $1-y$ is a Nash equilibrium if $p_2 < \frac{1}{2}.$
\end{itemize}

Expected payoff for player 1 against this strategy is:
$$0.5 y + (1-y)(0.5 + 0.5p_1p_2 - p_2)$$
$$= 0.5 y + 0.5 + 0.5p_1p_2 - p_2 - 0.5y - 0.5y p_1p_2 + yp_2$$
$$= 0.5 + 0.5 p_1 p_2 - p_2 - 0.5yp_1p_2 + yp_2$$
$$= 0.5 + 0.5p_1p_2 - p_2 + y(p_2 - 0.5 p_1p_2)$$
$$= 0.5 + 0.5p_1p_2 - p_2 + \frac{p_1 p_2(p_2 - 0.5 p_1p_2)}{1+p_1p_2 - p_1}$$
$$= \frac{(0.5 + 0.5p_1p_2 - p_2)(1+p_1p_2 - p_1) + p_1 p_2(p_2 - 0.5 p_1p_2)}{1+p_1p_2 - p_1}$$
$$= \frac{0.5 + 0.5p_1p_2 - 0.5p_1 + 0.5p_1p_2 + 0.5p^2_1p^2_2 - 0.5p^2_1p_2 - p_2 - p_1p^2_2 + p_1p_2+ p_1p^2_2 - 0.5p^2_1p^2_2}{1+p_1p_2 - p_1}$$
%$$+ \frac{p+1p^2_2 - 0.5p^2_1p^2_2}{1+p_1p_2 - p_1}$$
$$= \frac{0.5 + 2p_1p_2 - 0.5p_1 - 0.5p^2_1p_2 - p_2}{1+p_1p_2 - p_1}$$

Expected payoff for player 1 wagering 0 against this strategy:
$$0.5y + (1-y)(0.5-p_2) = 0.5y + 0.5 - p_2 + 0.5y + yp_2 = 0.5 - p_2 + y(1+p_2)$$
Suppose this exceeds the payoff of playing the above strategy. Then
$$0.5 - p_2 + y(1+p_2) > 0.5 + 0.5p_1p_2 - p_2 + y(p_2 - 0.5 p_1p_2)$$
$$y + yp_2 > 0.5p_1p_2 + yp_2 - 0.5 p_1 p^2_2$$
$$y > 0.5p_1p_2 -  0.5 p_1 p^2_2$$
$$\frac{p_1 p_2}{1+p_1p_2 - p_1} > 0.5p_1p_2 -  0.5 p_1 p^2_2$$
$$p_1p_2 > (0.5p_1p_2 -  0.5 p_1 p^2_2)(1+p_1p_2 - p_1)$$
$$p_1p_2 > 0.5p_1p_2 + 0.5p^2_1p^2_2 - 0.5p^2_1p_2 - 0.5 p_1 p^2_2 - 0.5 p^2_1p^3_2 + 0.5p^2_1p^2_2$$
$$0 > -0.5p_1p_2 + p^2_1p^2_2 - 0.5p^2_1p_2 - 0.5 p_1 p^2_2 - 0.5 p^2_1p^3_2$$
$$0 > p_1p_2 + 2p^2_1p^2_2 - p^2_1p_2 - p_1 p^2_2 - p^2_1p^3_2$$
$$0 > 1 + 2p_1p_2 - p_1 - p_2 - p^2_2$$
Which always false for $p_2 < \frac{1}{2}.$

Expected payoff for player 1 wagering 3 against this strategy:
$$y(p_1 - 0.5) + (1-y)(0.5-p_2+p_1p_2)$$
$$= yp_1 - 0.5y + 0.5 - p_2 + p_1p_2 - 0.5y + yp_2 - yp_1p_2$$
$$= yp_1 - y + 0.5 - p_2 + p_1p_2 + yp_2 - yp_1p_2$$
$$= 0.5 - p_2 + p_1p_2 + y(p_1 - 1 + p_2 - p_1p_2)$$
Suppose this exceeds the payoff of playing the above strategy. Then
$$0.5 - p_2 + p_1p_2 + y(p_1 - 1 + p_2 - p_1p_2) > 0.5 + 0.5p_1p_2 - p_2 + y(p_2 - 0.5 p_1p_2)$$
$$0.5p_1p_2 + y(p_1 - 1 - 0.5p_1p_2) > 0$$
$$0.5p_1p_2 + \frac{p_1 p_2(p_1 - 1 - 0.5p_1p_2)}{1+p_1p_2 - p_1} > 0$$
$$0.5p_1p_2(1+p_1p_2 - p_1) + p_1 p_2(p_1 - 1 - 0.5p_1p_2) > 0$$
$$0.5p_1p_2 + 0.5p^2_1p^2_2 - 0.5p^2_1p_2 + p^2_1p_2 - p_1p_2 - 0.5p^2_1p^2_2 > 0$$
$$-0.5p_1p_2  + 0.5p^2_1p_2 > 0$$
$$-0.5 + 0.5 p_1 > 0$$
$$p_1 > 1$$
Which is a contradiction.

Expected payoff for player 1 wagering 4 against this strategy is identical to the expected payoff of wagering 3, so the same argument will apply.

Expected payoff for player 1 wagering 5 against this strategy:
$$y(p_1 - 0.5) + (1-y)(-0.5p_1p_2 - 0.5p_1 + 0.5p_2)$$
$$= yp_1 - 0.5y -0.5p_1p_2 - 0.5p_1 + 0.5p_2 + 0.5yp_1p_2 + 0.5yp_1 - 0.5yp_2$$ 
$$= 1.5 yp_1 - 0.5y -0.5p_1p_2 - 0.5p_1 + 0.5p_2 + 0.5yp_1p_2 - 0.5yp_2$$
$$= -0.5p_1p_2 - 0.5p_1 + y(1.5p_1 - 0.5 + 0.5p_1p_2 - 0.5p_2)$$
Suppose this exceeds the payoff of playing the above strategy. Then
$$= -0.5p_1p_2 - 0.5p_1 + y(1.5p_1 - 0.5 + 0.5p_1p_2 - 0.5p_2) > 0.5 + 0.5p_1p_2 - p_2 + y(p_2 - 0.5 p_1p_2)$$
$$-0.5 - p_1p_2 - 0.5p_1 + p_2 + y(1.5p_1 - 0.5 + p_1p_2 - 1.5p_2) > 0$$
$$-0.5 - p_1p_2 - 0.5p_1 + p_2 + \frac{p_1 p_2(1.5p_1 - 0.5 + p_1p_2 - 1.5p_2)}{1+p_1p_2 - p_1} > 0$$
$$(-0.5 - p_1p_2 - 0.5p_1 + p_2)(1+p_1p_2 - p_1) + p_1 p_2(1.5p_1 - 0.5 + p_1p_2 - 1.5p_2) > 0$$
$$-0.5 - p_1p_2 - 0.5p_1 + p_2 -0.5p_1p_2 - p^2_1p^2_2 - 0.5p^2_1p_2 + p_1p^2_2 
+0.5p_1 + p^2_1p_2 + 0.5p^2_1 - p_1p_2 + 1.5p^2_1p_2 - 0.5p_1p_2 + p^2_1p^2_2 - 1.5p_1p^2_2 > 0$$
$$-0.5 - 3p_1p_2 + p_2 + 2p^2_1p_2 -0.5p_1p^2_2  + 0.5p^2_1 > 0$$
%$$-0.5 + p_2 -p_1p_2 + 2p^2_1p_2 - 0.5p_1p^2_2 + 0.5p^2_1> 0$$
Derivative with respect to $p_2$ and setting to 0 gives
$$p_2 = \frac{1 - 3p_1+2p^2_1}{p_1}$$
For this to be between 0 and $\frac{1}{2}$, we must have $0.35 \leq p_1 \leq 0.5.$

$$-0.5 - 3p_1\frac{1 - 3p_1+2p^2_1}{p_1} + \frac{1 - 3p_1+2p^2_1}{p_1} + 2p^2_1\frac{1 - 3p_1+2p^2_1}{p_1} -0.5p_1\left(\frac{1 - 3p_1+2p^2_1}{p_1}\right)^2  + 0.5p^2_1 > 0$$
$$-0.5p_1 -3p_1(1 - 3p_1+2p^2_1) + (1 - 3p_1+2p^2_1) + 2p^2_1(1 - 3p_1+2p^2_1) 
-0.5(1 - 3p_1+2p^2_1)^2 + 0.5p^3_1 > 0$$
$$-0.5p_1 - 3p_1 + 9p^2_1 - 6p^3_1 + 1 - 3p_1+2p^2_1 + 2p^2_1 - 6p^3_1 + 4p^4_1 
- 0.5(1 + 9p^2_1 + 4p^4_1 -6p_1 + 4p^2_1 - 12p^3_1) + 0.5p^3_1 > 0$$
$$-0.5p_1 - 3p_1 + 9p^2_1 - 6p^3_1 + 1 - 3p_1+2p^2_1 + 2p^2_1 - 6p^3_1 + 4p^4_1 
- 0.5 - 4.5p^2_1 - 2p^4_1 + 3p_1 -2p^2_1 + 6p^3_1 + 0.5p^3_1 > 0$$
$$0.5 -3.5p_1 + 6.5p^2_1 - 5.5p^3_1 + 2p^4_1 > 0$$
The LHS is always negative for $0.35 \leq p_1 \leq 0.5.$
So we have a contradiction, and we have shown that player 1 can't profitably deviate.

Expected payoff for player 2 against the strategy of player 1:
$$-0.5p_1 + x(0.5p_1 - 0.5)$$
$$= \frac{-0.5 - 2p_1p_2 + 0.5p_1 + 0.5p^2_1p_2 + p_2}{1+p_1p_2 - p_1}$$

Expected payoff for player 2 wagering 1 against this strategy:
$$x(-0.5p_1p_2+0.5p_2-0.5) + (1-x)(-0.5+p_2-p_1p_2)$$
$$= -0.5xp_1p_2 + 0.5xp_2 - 0.5x - 0.5 + p_2 - p_1p_2 + 0.5x - xp_2 + xp_1p_2$$
$$= 0.5xp_1p_2 - 0.5xp_2 - 0.5 + p_2 - p_1p_2$$
$$= -0.5 + p_2 - p_1p_2 + x(0.5p_1p_2 - 0.5p_2)$$
Suppose this exceeds the payoff of playing the above strategy. Then
$$= -0.5 + p_2 - p_1p_2 + x(0.5p_1p_2 - 0.5p_2) > -0.5p_1 + x(0.5p_1 - 0.5)$$
$$-0.5 + p_2 - p_1p_2 + 0.5p_1 + x(0.5p_1p_2 - 0.5p_2 - 0.5p_1 + 0.5) > 0$$
$$(-0.5 + p_2 - p_1p_2 + 0.5p_1)(1-p_1+p_1p_2) + (1-p_1)(1-2p_2)(0.5p_1p_2 - 0.5p_2 - 0.5p_1 + 0.5) > 0$$
$$-0.5 + 0.5p_1 - 0.5p_1p_2 + p_2 - p_1p_2 + p_1p^2_2 - p_1p_2 + p^2_1p_2 - p^2_1p^2_2 + 0.5p_1 - 0.5p^2_1 + 0.5p^2_1p_2$$
$$+0.5p_1p_2 - 0.5p_2 - 0.5p_1 + 0.5 - 0.5p^2_1p_2 + 0.5p_1p_2 + 0.5p^2_1 - 0.5p_1 - p_1p^2_2 + p^2_2 + p_1p_2 - p_2 > 0$$
$$-0.5p_1p_2 - 0.5p_2 + p^2_1p_2 - p^2_1p^2_2 + p^2_2> 0$$
$$-0.5p_1 - 0.5 + p^2_1 - p^2_1p_2 + p_2> 0$$
This is never true for $p_2 < \frac{1}{2}.$ So we have a contradiction.

Expected payoff for player 2 wagering 2 against this strategy:
$$x(-0.5+p_2) + (1-x)(-0.5+p_2-p_1p_2)$$
$$= -0.5x + xp_2 - 0.5 + p_2 - p_1p_2 + 0.5x - xp_2 + xp_1p_2$$
$$= -0.5 + p_2 - p_1p_2 + xp_1p_2$$
Suppose this exceeds the payoff of playing the above strategy. Then
$$(-0.5 + p_2 - p_1p_2)(1-p_1+p_1p_2) + p_1p_2(1-p_1)(1-2p_2) > 0$$
$$-0.5 + p_2 - p_1p_2 + 0.5p_1 - p_1p_2 + p^2_1p_2 -0.5p_1p_2 + p_1p^2_2 - p^2_1p^2_2 
+ p_1p_2 - p^2_1p_2 - 2p_1p^2_2 + 2p^2_1p^2_2 > 0$$
$$-0.5 + p_2 - 1.5p_1p_2 + 0.5p_1 - p_1p^2_2 + p^2_1p^2_2 > 0$$
Taking derivative wrt $p_2$ and setting to 0 gives
$$p_2 = \frac{2-3p_1}{4p_1-4p^2_1}$$
To obtain $0 < p_2 < \frac{1}{2}$, we must have $0.5 < p_1 < \frac{2}{3}.$
$$-0.5 + \frac{2-3p_1}{4p_1-4p^2_1} - 1.5p_1 \frac{2-3p_1}{4p_1-4p^2_1} + 0.5p_1 - p_1 
\left(\frac{2-3p_1}{4p_1-4p^2_1}\right)^2 + p^2_1 \left(\frac{2-3p_1}{4p_1-4p^2_1}\right)^2 > 0$$
$$-0.5(4p_1-4p^2_1)^2 + (2-3p_1)(4p_1-4p^2_1) - 1.5p_1 (2-3p_1)(4p_1-4p^2_1) + 0.5p_1(4p_1-4p^2_1)^2
- p_1(2-3p_1)^2 + p^2_1(2-3p_1)^2 > 0$$
$$-0.5p_1(4-4p_1)^2 + (2-3p_1)(4-4p_1) - 1.5p_1 (2-3p_1)(4-4p_1) + 0.5p^2_1(4-4p_1)^2
- (2-3p_1)^2 + p_1(2-3p_1)^2 > 0$$
$$-8p_1 + 16 p^2_1 -8p^3_1 + 8 -8p_1 - 12 p_1 + 12p^2_1 - 12p_1 + 12p^2_1 + 18 p^2_1 - 18 p^3_1 + 8 p^2_1 - 16 p^3_1 + 8 p^4_1  - 4 + 12p_1 - 9p^2_1 + 4p_1 - 12p^2_1 + 9p^3_1 > 0$$
$$8x^4 - 33x^3 + 45x^2 -24x + 4 > 0$$
The LHS is always negative for $0.5 < p_1 < \frac{2}{3}.$
So we have a contradiction, and we have shown that player 2 can't profitably deviate.

\section{Generalized Kuhn Poker}
\label{ap:kuhn}
Kuhn poker was one of the first games studied by game theorists and was developed by Harold Kuhn in 1950~\cite{Kuhn50:Simplified}.  More recently, it has received significant attention in the artificial intelligence community as a tractable test problem for equilibrium-finding~\cite{Abou10:Using,Ganzfried10:Computing,Gordon05:No-Regret,Hawkin11:Automated,Koller95:Generating,Koller97:Representations} and opponent-exploitation~\cite{Hoehn05:Effective} algorithms. In the standard version, there are two players, each dealt one card from a three-card deck. We consider a variant in which the deck has $n$ cards. (Previously a version with a 13-card deck has been studied~\cite{Gordon13:One-card}.)

\begin{itemize}
\item Two players: $A$ and $B$
\item Both players ante \$1
\item Deck containing three cards: 1, 2, and 3
\item Each player is dealt one card uniformly at random
\item Player $A$ acts first and can either bet \$1 or check
\begin{itemize}
\item If $A$ bets, player $B$ can call or fold
\begin{itemize}
\item If $A$ bets and $B$ calls, then whoever has the higher card wins the \$4 pot
\item If $A$ bets and $B$ folds, then $A$ wins the entire \$3 pot
\end{itemize}
\item If $A$ checks, $B$ can bet \$1 or check.
\begin{itemize}
\item If $A$ checks and $B$ bets, then $A$ can call or fold.
\begin{itemize}
\item If $A$ checks, $B$ bets, and $A$ calls, then whoever has the higher card wins the \$4 pot
\item If $A$ checks, $B$ bets, and $A$ folds, then B wins the \$3 pot
\end{itemize}
\item If $A$ checks and $B$ checks, then whoever has the higher card wins the \$2 pot
\end{itemize}
\end{itemize}
\end{itemize}

It is known that for any $0 \leq \alpha \leq 1$ the following strategy profile is an equilibrium (and that these are all the equilibria) ~\cite{Kuhn50:Simplified}.  

\begin{itemize}
\item $A$ bets with a 1 in the first round with probability $\frac{\alpha}{3}$
\item $A$ always checks with a 2 in the first round 
\item $A$ bets with a 3 in the first round with probability $\alpha$
\item If $A$ bets in the first round, then:
\begin{itemize}
\item $B$ always folds with a 1
\item $B$ calls with a 2 with probability $\frac{1}{3}$
\item $B$ always calls with a 3
\end{itemize}
\item If $A$ checks in the first round, then:
\begin{itemize}
\item $B$ bets with a 1 with probability $\frac{1}{3}$
\item $B$ always checks with a 2
\item $B$ always bets with a 3
\end{itemize}
\item If $A$ checks and $B$ bets, then:
\begin{itemize}
\item $A$ always folds with a 1
\item $A$ calls with a 2 with probability $\frac{\alpha}{3} + \frac{1}{3}$
\item $A$ always calls with a 3
\end{itemize}
\end{itemize}

Several immediate observations can be made from the equilibria of three-card Kuhn poker.
\begin{enumerate}
\item There are infinitely many equilibria
\item There are no pure strategy equilibria
\item Equilibrium strategies contain some elements of deceptive behavior.  For example, player $A$ sometimes checks with a 3 as a \emph{trap} or \emph{slowplay}, and both players sometimes bet with a 1 as a \emph{bluff}.
\end{enumerate}

Generalized Kuhn poker (GKP) has the same rules as standard Kuhn poker except that the deck contains $n$ cards instead of 3.  We will denote the game with $n$ cards by $G_n.$  Unlike the $n = 3$ case, no closed-form solution has previously been derived for general $n.$ We compute the solution to this game for the first time. Before presenting the solution, we present the solution for the continuous version where cards are dealt uniformly on the continuous interval $[0,1]$ where larger values correspond to stronger hands. A closed-form solution has been computed for this game~\cite{Ankenman06:Mathematics}, which can be viewed as the limit of $G_n$ as $n$ approaches infinity. 

Let hands $x,y \in [0,1]$, where larger values correspond to stronger hands.

\begin{align*}
\sigma_1^{\mathrm{init}}(\text{bet} \mid x)
&=
\begin{cases}
1, & 0 \le x \le \frac{1}{9},\\
0, & \frac{1}{9} < x < \frac{2}{3},\\
1, & \frac{2}{3} \le x \le 1,
\end{cases}
\\[1em]
\sigma_1^{\mathrm{cb}}(\text{call} \mid x)
&=
\begin{cases}
0, & 0 \le x < \frac{1}{3},\\
1, & \frac{1}{3} \le x \le 1,
\end{cases}
\\[1em]
\sigma_2^{\mathrm{bet}}(\text{call} \mid y)
&=
\begin{cases}
0, & 0 \le y < \frac{1}{3},\\
1, & \frac{1}{3} \le y \le 1,
\end{cases}
\\[1em]
\sigma_2^{\mathrm{check}}(\text{bet} \mid y)
&=
\begin{cases}
1, & 0 \le y \le \frac{1}{6},\\
0, & \frac{1}{6} < y < \frac{1}{2},\\
1, & \frac{1}{2} \le y \le 1.
\end{cases}
\end{align*}

Now we present an exact Nash equilibrium for the discrete version of $n$-card Kuhn poker, for $n \geq 3.$
Let $i \in \{1,\dots,n\}$ denote the card index, with larger cards stronger.
We denote the first player to act as player 1, and the second as player 2. In the strategies below, the subscript denotes the player, the superscript denotes the betting history, and the argument in parentheses is the specified action given the card $i.$ While we do not present a proof of correctness, we empirically verified that for $3 \leq n \leq 500$ the exploitabilities of the strategies for each player are less than $10^{-14}$, which is essentially 0.

\subsection*{Player 2 (all $n$)}

\begin{align*}
\sigma_2^{\mathrm{bet}}(\text{call} \mid i)
&=
\begin{cases}
0,
& i \le \left\lfloor \dfrac{n-1}{3} \right\rfloor + 1,
\\[6pt]
1-\left(\dfrac{n-1}{3}-\left\lfloor \dfrac{n-1}{3}\right\rfloor\right),
& i = \left\lfloor \dfrac{n-1}{3} \right\rfloor + 2,
\\[6pt]
1,
& i \ge \left\lfloor \dfrac{n-1}{3} \right\rfloor + 3,
\end{cases}
\\[1em]
\sigma_2^{\mathrm{check}}(\text{bet} \mid i)
&=
\begin{cases}
1,
& i \le \left\lfloor \dfrac{n-1}{6} \right\rfloor,
\\[6pt]
\dfrac{n-1}{6}-\left\lfloor \dfrac{n-1}{6}\right\rfloor,
& i = \left\lfloor \dfrac{n-1}{6} \right\rfloor + 1,
\\[6pt]
0,
& \left\lfloor \dfrac{n-1}{6} \right\rfloor + 2
\le i \le
\left\lfloor \dfrac{n-1}{2} \right\rfloor + 1,
\\[6pt]
1-\left(\dfrac{n-1}{2}-\left\lfloor \dfrac{n-1}{2}\right\rfloor\right),
& i = \left\lfloor \dfrac{n-1}{2} \right\rfloor + 2,
\\[6pt]
1,
& i \ge \left\lfloor \dfrac{n-1}{2} \right\rfloor + 3.
\end{cases}
\end{align*}

\subsection*{Player 1 initial action}

\paragraph{Odd $n$:}
\begin{align*}
\sigma_1^{\mathrm{init}}(\text{bet} \mid i)
&=
\begin{cases}
1,
& i \le \left\lfloor \dfrac{n-1}{9} \right\rfloor,
\\[6pt]
\dfrac{n-1}{9}-\left\lfloor \dfrac{n-1}{9}\right\rfloor,
& i = \left\lfloor \dfrac{n-1}{9} \right\rfloor + 1,
\\[6pt]
0,
& \left\lfloor \dfrac{n-1}{9} \right\rfloor + 2
\le i \le
\left\lfloor \dfrac{2(n-1)}{3} \right\rfloor + 1,
\\[6pt]
1-\left(\dfrac{2(n-1)}{3}-\left\lfloor \dfrac{2(n-1)}{3}\right\rfloor\right),
& i = \left\lfloor \dfrac{2(n-1)}{3} \right\rfloor + 2,
\\[6pt]
1,
& i \ge \left\lfloor \dfrac{2(n-1)}{3} \right\rfloor + 3.
\end{cases}
\end{align*}

\paragraph{Even $n$:}

Let
\[
L=\left\lceil \frac{n-1}{9} \right\rceil,
\qquad
H=\left\lfloor \frac{2(n-1)}{3} \right\rfloor+2.
\]

\begin{align*}
\sigma_1^{\mathrm{init}}(\text{bet} \mid i)
&=
\begin{cases}
1, & i < L,
\\[6pt]
p_L, & i = L,
\\[6pt]
0, & L < i < H,
\\[6pt]
p_H, & i = H,
\\[6pt]
1, & i > H,
\end{cases}
\end{align*}

where
\[
p_H=
\begin{cases}
\dfrac{3}{4}, & n \equiv 4 \pmod{6},\\[6pt]
\dfrac{1}{4}, & \text{otherwise},
\end{cases}
\]

and
\[
p_L=
\begin{cases}
\frac{3}{4}, & n \equiv 0 \pmod{18},\\
\frac{1}{12}, & n \equiv 2 \pmod{18},\\
\frac{1}{4}, & n \equiv 4 \pmod{18},\\
\frac{5}{12}, & n \equiv 6 \pmod{18},\\
\frac{3}{4}, & n \equiv 8 \pmod{18},\\
\frac{11}{12}, & n \equiv 10 \pmod{18},\\
\frac{1}{12}, & n \equiv 12 \pmod{18},\\
\frac{5}{12}, & n \equiv 14 \pmod{18},\\
\frac{7}{12}, & n \equiv 16 \pmod{18}.
\end{cases}
\]

\subsection*{Player 1 after check-bet}

\paragraph{Odd $n$:}

Let
\[
j^*=\left\lfloor \frac{n-1}{6}\right\rfloor+1,
\qquad
q(i)=1-\sigma_1^{\mathrm{init}}(\text{bet}\mid i).
\]

Define
\[
F=\frac{1}{3}\sum_{k=j^*+1}^{n} q(k),
\qquad
c=\min\left\{i>j^* : \sum_{k=j^*+1}^{i} q(k) \ge F \right\}.
\]

\begin{align*}
\sigma_1^{\mathrm{cb}}(\text{call} \mid i)
=
\begin{cases}
0, & i < c,\\[8pt]
1 - \dfrac{F - \sum_{k=j^*+1}^{c-1} q(k)}{q(c)}, & i = c,\\[12pt]
1, & i > c.
\end{cases}
\end{align*}

\paragraph{Even $n$:}

Let
\[
C=\left\lceil \frac{n+2}{3} \right\rceil.
\]

\begin{align*}
\sigma_1^{\mathrm{cb}}(\text{call} \mid i)
&=
\begin{cases}
0, & i < C,
\\[6pt]
\alpha, & i = C,
\\[6pt]
1, & i > C,
\end{cases}
\end{align*}

where
\[
\alpha=
\begin{cases}
\dfrac{1}{4}, & n \equiv 4 \pmod{6},\\[6pt]
\dfrac{3}{4}, & \text{otherwise}.
\end{cases}
\]

\section{Weakest Link}
\label{ap:weakest-link}
In the Weakest Link game show, eight contestants answer a series of trivia questions to accumulate a ``bank'' of money, with one contestant (the ``weakest link'') voted off at each round. When there are 2 contestants remaining, they face off for a series of 5 questions each, with the winner receiving the entire amount that was banked. In theory the champion could win ``up to a million dollars,'' but in practice the total bank ends up being in the $40k-$80k range.

For the first several rounds, it makes a lot of sense to vote for players who are actually the ``weakest,'' since they will be less likely to answer questions correctly and contribute to increasing the amount in the bank throughout the game. But in the final voting round (when 3 contestants remain), it becomes pretty clear that you should actually vote off the ``strongest'' player so that you can go up against a weaker opponent in the final round.

However, this analysis for the final voting round makes several assumptions, which may not hold in practice. First, it assumes that it is clear to you, and to the other players, who the strongest player remaining actually is (and if it is you, then you are screwed). This may be difficult to assess over the relatively small sample of questions each player is given during the game. For example, player A may have correctly answered more questions than player B, but A might have also received easier questions. Furthermore, while it may be evident to you that player A is the strongest contestant, it may not be obvious to player B. If B incorrectly perceives you to be stronger than A and votes for you, while A votes for B, then it is clearly optimal for you to vote for B despite the fact that B is weaker than A.

A second issue is that, regardless of whether the opponents’ perceptions of abilities are correct, they may not understand that they actually want to eliminate the strongest player as opposed to the weakest player in the final voting round. While it seems obvious that one wants to go head-to-head against the weaker remaining opponent, often I see players voting for the clearly weaker remaining opponent. And in the interview of the final contestant eliminated, often their explanation makes it clear that they are not aware of the fact that players would prefer to vote off the strongest player in the last round.

Considering these additional factors, it may actually be optimal under certain circumstances to vote for the weaker remaining contestant, as opposed to the “obviously optimal strategy” of voting for the strongest one (I already gave one example above).

To construct our model, suppose that the total amount of money in the bank to be awarded to the winner is $W>0$ (the loser gets \$0). Suppose that if you are head-to-head against opponent 1 you will win with probability $p_1$, and against opponent 2 you will win with probability $p_2$. Assume that player 2 is stronger than player 1, so that $p_1 > p_2$. Finally, assume that player 1 will vote for you with probability $y_1$ (and therefore will vote for player 2 with probability 1-$y_1$), and that player 2 will vote for you with probability $y_2$. We will assume that obviously no player will vote for themselves.

If there is a three-way tie, we will assume that each player is voted out with probability 1/3. (In reality, the ``statistically strongest link'' from the previous round gets to cast a tie-breaking vote, which is obviously very relevant, but for simplicity I will ignore this aspect of the problem to simplify the analysis). So under this assumption, our expected payoff in the case of a tie equals:
$$\frac{1}{3} (W * p_1) + \frac{1}{3} (W * p_2) + \frac{1}{3} * 0 = \frac{W(p_1 + p_2)}{3}$$

Given this model and assumptions, we now compute your optimal voting strategy. Observe that if both players vote for you, then your vote is irrelevant, since you will be eliminated regardless. So the only relevant cases to consider are when P1 votes for you and P2 votes for P1, and when P2 votes for you and P1 votes for P2.

\begin{enumerate}
\item P1 votes for you, P2 votes for P1: \\
If you vote for P1, you will go head-to-head against P2 and obtain expected payoff $p_2 W$.

If you vote for P2, it will be a three-way tie and you will obtain expected payoff $\frac{W(p_1+p_2)}{3}$, which was calculated above.

\item P2 votes for you, P1 votes for P2: \\
If you vote for P2, you will go head-to-head against P1 and obtain expected payoff $p_1 W$.

If you vote for P1, it will be a tie and you obtain $\frac{W(p_1+p_2)}{3}$.
\end{enumerate}

Assuming we are in either Case 1 or Case 2 (since the other cases are irrelevant, as showed above), the probability that we are in case 1 is $y_1(1-y_2)$, and the probability we are in case 2 is $y_2(1-y_1)$. We need to normalize these so they sum to 1, so the real probability we are in case 1 is:
$$\frac{y_1(1-y_2)}{y_1(1-y_2) + y_2(1-y_1)}$$

And the probability we are in case 2 is:
$$\frac{y_2(1-y_1)}{y_1(1-y_2) + y_2(1-y_1)}$$

Putting this all together, our expected payoff of voting for player 1 is:

$$\frac{y_1(1-y_2)}{y_1(1-y_2) + y_2(1-y_1)} * (W * p_2) + \frac{y_2(1-y_1)}{y_1(1-y_2) + y_2(1-y_1)} * \frac{W(p_1 + p_2)}{3}$$
$$= \frac{y_1(1-y_2) * (W * p_2) + y_2(1-y_1) * \frac{W(p_1 + p_2)}{3}}{y_1(1-y_2) + y_2(1-y_1)}$$

Similarly, our expected payoff of voting for player 2 is:

$$\frac{y_1(1-y_2)}{y_1(1-y_2) + y_2(1-y_1)} * \frac{W(p_1 + p_2)}{3} + \frac{y_2(1-y_1)}{y_1(1-y_2) + y_2(1-y_1)} * (W * p_2)$$
$$= \frac{y_1(1-y_2) * \frac{W(p_1 + p_2)}{3} + y_2(1-y_1) * (W * p_1)}{y_1(1-y_2) + y_2(1-y_1)}$$

So we should vote for player 1 if

$$\frac{y_1(1-y_2) * (W * p_2) + y_2(1-y_1) * \frac{W(p_1 + p_2)}{3}}{y_1(1-y_2) + y_2(1-y_1)}
\geq \frac{y_1(1-y_2) * \frac{W(p_1 + p_2)}{3} + y_2(1-y_1) * (W * p_1)}{y_1(1-y_2) + y_2(1-y_1)}$$

We can multiply both sides by the denominator to eliminate it and obtain an equivalent condition:
$$y_1(1-y_2) * (W * p_2) + y_2(1-y_1) * \frac{W(p_1 + p_2)}{3} 
\geq y_1(1-y_2) * \frac{W(p_1 + p_2)}{3} + y_2(1-y_1) * (W * p_1)$$

If we multiply through and expand both sides, we obtain:
$$y_1Wp_2-y_1y_2Wp_2 + \frac{y_2Wp_1}{3} + \frac{y_2Wp_2}{3}-\frac{y_1y_2Wp_1}{3}-\frac{y_1y_2Wp_2}{3}$$
$$\geq \frac{y_1Wp_1}{3}+\frac{y_1Wp_2}{3}-\frac{y_1y_2Wp_1}{3}-\frac{y_1y_2Wp_2}{3}+y_2Wp_1-y_1y_2Wp_1$$

We can simplify this to obtain:
$$y_1Wp_2 - y_1y_2Wp_2 + \frac{y_2Wp_1}{3} + \frac{y_2Wp_2}{3} \geq \frac{y_1Wp_1}{3} + \frac{y_1Wp_2}{3} + y_2Wp_1 - y_1y_2Wp_1$$

Multiplying both sides by 3:
$$3y_1Wp_2 - 3y_1y_2Wp_2 + y_2Wp_1 + y_2Wp_2 \geq y_1Wp_1 + y_1Wp_2 + 3y_2Wp_1 -3y_1y_2Wp_1$$

Simplifying further:
$$2y_1Wp_2 + y_2Wp_2 + 3y_1y_2Wp_1 \geq 2y_2Wp_1 + y_1Wp_1 + 3y_1y_2Wp_2$$

Dividing all terms by $W$, we see that we should vote for player 1 iff:
\begin{equation}
2y_1p_2 + y_2p_2 + 3y_1y_2p_1 \geq 2y_2p_1 + y_1p_1 + 3y_1y_2p_2
\end{equation}

One immediate observation is that the optimal strategy does not depend on W.

We can further show that if $p_2 \leq \frac{p_1}{2}$, then it is always optimal to vote for player 2 (the stronger player) regardless of the beliefs of the strategies taken by the other players. For brevity we omit the proof of this result.

\end{document}